\newtheorem{theorem}{Theorem}
\newtheorem{definition}[theorem]{Definition}
\newtheorem{proposition}[theorem]{Proposition}
\newtheorem{lemma}[theorem]{Lemma}
\newtheorem{corollary}[theorem]{Corollary}
\newcommand{\eh}{\underline h_T}
\newcommand{\hh}{\overline h_T}
\newcommand{\hhc}[1]{\overline h_{T_{#1}}}
\newcommand{\ed}{\underline d_T}
\newcommand{\hd}{\overline d_T}
\newcommand{\hdc}[1]{\overline d_{T_{#1}}}
\newcommand{\EH}{\underline H_T}
\newcommand{\HH}{\overline H_T}
\newcommand{\ED}{\underline D_T}
\newcommand{\HD}{\overline D_T}
\newcommand{\nl}{\begin{pspicture}(2.3,0)%\psframe(2.3,2)
\psline[linearc=0]{->}(2,1.5)(2,.5)(0,.5)
\end{pspicture}}
\title{Asymptotic Proportion of Hard Instances of the Halting
Problem\footnote{\textbf{Version note.}
The differences between this version and arXiv:1307.7066v1 are significant.
They have been listed in the last paragraph of Section~\ref{S:intro}.
This publication has appeared in Acta Cybernetica \textbf{21} (2014) 307--330.
Excluding layout, this arXiv version is essentially identical to the Acta
Cybernetica version.
The author thanks the editor of Acta Cybernetica for the kind permission to
submit this version to arXiv.}
}
\author{Antti Valmari\\
Tampere University of Technology,
Department of Mathematics\\
PO Box 553, FI-33101 Tampere, FINLAND}
\begin{document}
\maketitle

\begin{abstract}
Although the halting problem is undecidable, imperfect testers that fail on
some instances are possible.
Such instances are called \emph{hard} for the tester.
One variant of imperfect testers replies ``I don't know'' on hard instances,
another variant fails to halt, and yet another replies incorrectly ``yes'' or
``no''.
Also the halting problem has three variants: does a given program halt on the
empty input, does a given program halt when given itself as its input, or does
a given program halt on a given input.
The failure rate of a tester for some size is the proportion of hard instances
among all instances of that size.
This publication investigates the behaviour of the failure rate as the size
grows without limit.
Earlier results are surveyed and new results are proven.
Some of them use C++ on Linux as the computational model.
It turns out that the behaviour is sensitive to the details of the programming
language or computational model, but in many cases it is possible to prove
that the proportion of hard instances does not vanish.
\end{abstract}

{\small
\noindent\textbf{Keywords:} halting problem, three-way tester,
generic-case tester, approximating tester

% \noindent\textbf{ACM Computing Classification System 2012:}
% Theory of computation--Models of computation--Computability
\noindent\textbf{ACM Computing Classification System 1998:}
F.1.1 Models of Computation--Computability theory

\noindent\textbf{Mathematics Subject Classification 2010:}
68Q17 Computational difficulty of problems}

\section{Introduction}\label{S:intro}

Turing proved in 1936 that undecidability exists by showing that the halting
problem is undecidable~\cite{Tur36}.
Rice extended the set of known undecidable problems to cover all questions of
the form ``does the partial function computed by the given program have
property $X$'', where $X$ is any property that at least one computable partial
function has and at least one does not have~\cite{Ric53}.
For instance, $X$ could be ``returns $1$ for all syntactically correct C++
programs and $0$ for all remaining inputs.''
In other words, it may be impossible to find out whether a given weird-looking
program is a correct C++ syntax checker.
These results are basic material in such textbooks as~\cite{HoU79}.

On the other hand, imperfect halting testers are possible.
For any instance of the halting problem, a \emph{three-way tester} eventually
answers ``yes'', ``no'', or ``I don't know''.
If it answers ``yes'' or ``no'', then it must be correct.
We say that the ``I don't know'' instances are \emph{hard instances} for the
tester.
Also other kinds of imperfect testers have been introduced, as will be
discussed in Section~\ref{S:variants}.

Assume that $T_1$ is a tester.
By Turing's proof, it has a hard instance $I_1$.
If $I_1$ is a halting instance, then let $T_2$ be ``if the input is $I_1$,
then reply `yes', otherwise run $T_1$ and return its reply''.
If $I_1$ is non-halting, then let $T_2$ be ``if the input is $I_1$, then reply
`no', otherwise run $T_1$ and return its reply''.
By construction, $T_2$ is a tester with one fewer hard instances than $T_1$
has.
By Turing's proof, also $T_2$ has a hard instance.
Let us call it $I_2$.
It is hard also for $T_1$.
This reasoning can be repeated without limit, yielding an infinite sequence
$T_1$, $T_2$, \ldots\ of testers and $I_1$, $I_2$, \ldots\ of instances such
that $I_i$ is hard for $T_1$, \ldots, $T_i$ but not for $T_{i+1}$, \ldots.
Therefore, every tester has an infinite number of hard instances, but no
instance is hard for all testers.

A program that answers ``I don't know'' for every program and input is a
three-way tester, although it is useless.
A much more careful tester simulates the given program on the given input at
most $9^{9^n}$ steps, where $n$ is the joint size of the program and its
input.
If the program stops by then, then the tester answers ``yes''.
If the program repeats a configuration (that is, a complete description of the
values of variables, the program counter, etc.) by then, then the tester
answers ``no''.
Otherwise it answers ``I don't know''.
With this theoretically possible but in practice unrealistic tester, any hard
halting instance has a finite but very long running time.

The proofs by Turing and Rice may leave the hope that only rare artificial
contrived programs yield hard instances.
One could dream of a three-way tester that answers very seldom ``I don't
know''.
This publication analyses this issue, by surveying and proving results that
tell how the proportion of hard instances behaves when the size of the
instances grows without limit.

Section~\ref{S:def} presents the variants of the halting problem and imperfect
testers surveyed, together with some basic results and notation.
Earlier research is discussed in Section~\ref{S:related}.
The section contains some proofs to bring results into the framework of this
publication.
Section~\ref{S:proglang} presents some new results in the case that a program
has many copies of all big sizes, or information can be packed densely inside
the program.
It is not always assumed that the program has access to the information.
A natural example of such information is dead code, such as
\texttt{if(1==0)then\{\ldots\}}.
In Section~\ref{S:C++}, results are derived for C++ programs with inputs from
files.
Section~\ref{S:conclusions} briefly concludes this publication.

This publication is a significantly extended version of~\cite{Val13a,Val13b}.
The papers~\cite{Val13a,Val13b} are otherwise essentially the same, but three
proofs were left out from~\cite{Val13b} because of lack of space.
In the present publication, Theorems~\ref{T:ill-eof} and~\ref{T:hard-sd} and
Corollaries~\ref{C:nolimit1} and~\ref{C:nolimit2} are new results lacking
from~\cite{Val13a,Val13b}.
Furthermore,~\cite{Val13a,Val13b} incorrectly claimed the opposite of
Theorem~\ref{T:hard-sd}.
The present publication fixes this error and also a small error in
Proposition~\ref{P:syntax}.

\section{Concepts and Notation}\label{S:def}

\subsection{Variants of the Halting Problem}\label{S:variants}

The literature on hard instances of the halting problem considers at least
three variants of the halting problem:
\begin{description}
\item[E] does the given program halt on the \emph{empty} input~\cite{HaM06},
\item[S] does the given program halt when given \emph{itself} as its
input~\cite{Lyn74,Ryb07}, and
\item[G] does the given program halt on the \emph{given}
input~\cite{CaS08,KSZ05,ScJ99}.
\end{description}
Each variant is undecidable.
Variant G has a different notion of instances from others: program--input
pairs instead of just programs.
A tester for G can be trivially converted to a tester for E or S, but the
proportion of hard program--input pairs among all program--input pairs of some
size is not necessarily the same as the similar proportion with the input
fixed to the empty one or to the program itself.

The literature also varies on what the tester does when it fails.
Three-way testers, that is, the ``I don't know'' answer is used implicitly
by~\cite{Lyn74}, as it discusses the union of two decidable sets, one being a
subset of the halting and the other of the non-halting instances.
In \emph{generic-case decidability}~\cite{Ryb07}, instead of the ``I don't
know'' answer, the tester itself fails to halt.
Yet another idea is to always give a ``yes'' or ``no'' answer, but let the
answer be incorrect for some instances~\cite{KSZ05,ScJ99}.
Such a tester is called \emph{approximating}.
One-sided results, where the answer is either ``yes'' or ``I don't know'',
were presented in~\cite{CaS08,HaM06}.
For a tester of any of the three variants, we say that an instance is
\emph{easy} if the tester correctly answers ``yes'' or ``no'' on it, otherwise
the instance is \emph{hard}.

\newcommand{\threeway}[1]{\textnormal{three-way(#1)}}
\newcommand{\generic}[1]{\textnormal{generic(#1)}}
\newcommand{\apprx}[1]{\textnormal{approx(#1)}}
These yield altogether nine different sets of testers, which we will denote
with \threeway{X},\linebreak
\generic{X}, and \apprx{X}, where X is E, S, or G.
Some simple facts facilitate carrying some results from one variant of testers
to another.

\begin{proposition}\label{P:3way->}
For any three-way tester there is a generic-case tester that has precisely the
same easy ``yes''-instances, easy ``no''-instances, hard halting instances,
and hard non-halting instances.

There also is an approximating tester that has precisely the same easy
``yes''-instances, at least the same easy ``no''-instances, precisely the same
hard halting instances, and no hard non-halting instances; and an
approximating tester that has at least the same easy ``yes''-instances,
precisely the same easy ``no''-instances, no hard halting instances, and
precisely the same hard non-halting instances.
\end{proposition}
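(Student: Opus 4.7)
The plan is to construct each required tester explicitly by wrapping the given three-way tester $T$ and only modifying what it does when it would output ``I don't know''. Since all of $T$'s outputs lie in a three-element set, testing which symbol $T$ produced is a trivial computable operation, so each construction is obviously effective.

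For the generic-case tester, I would define $T'$ as: run $T$ on the input; if $T$ answers ``yes'' or ``no'', echo that answer and halt; if $T$ answers ``I don't know'', deliberately enter an infinite loop (e.g.\ \texttt{while(1);}). Then an instance is easy ``yes'' (resp.\ easy ``no'') for $T'$ exactly when it is so for $T$, and $T'$ fails to halt exactly on the instances where $T$ says ``I don't know''. Splitting the latter set by whether the underlying program halts yields the same hard halting and hard non-halting instances as $T$.

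For the first approximating tester, I would instead replace ``I don't know'' by ``no'': the easy ``yes''-instances are unchanged; every hard non-halting instance of $T$ now receives the correct answer ``no'' and so joins the easy ``no''-instances (making this set a superset of $T$'s); every hard halting instance of $T$ now receives the wrong answer ``no'' and so remains hard; and there are no hard non-halting instances left. The second approximating tester is the symmetric construction that replaces ``I don't know'' by ``yes''.

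There is no real obstacle here beyond bookkeeping: the only thing to check in each case is that each of the four categories (easy ``yes'', easy ``no'', hard halting, hard non-halting) lands where the statement requires, which follows by a direct case analysis on $T$'s output and on whether the underlying program halts. The most ``delicate'' point, if any, is simply noting that wrapping $T$ in this way is permitted in the chosen computational model---i.e.\ that one can programmatically inspect the symbol produced by $T$ and branch on it---which is immediate for any reasonable model.
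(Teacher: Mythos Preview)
Your proposal is correct and follows exactly the same approach as the paper: the paper's proof is a one-liner stating that the three-way tester is converted by replacing the ``I don't know'' answer with an eternal loop, the reply ``no'', or the reply ``yes'', and you have simply spelled out this construction together with the straightforward case analysis verifying where each instance lands.
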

\begin{proof}
A three-way tester can be trivially converted to the promised tester by
replacing the ``I don't know'' answer with an eternal loop, the reply ``no'',
or the reply ``yes''.
\end{proof}

\begin{proposition}\label{P:gen->gen}
For any generic-case tester there is a generic-case tester that has at least
the same ``yes''-instances, precisely the same ``no''-instances, no hard
halting instances, and precisely the same hard non-halting instances.
\end{proposition}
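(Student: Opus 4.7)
The plan is to dovetail the given generic-case tester $T$ with a direct simulation of the input. Given an instance $I$, the new tester $T'$ runs, step by step in parallel, (a) the tester $T$ on $I$ and (b) the simulation of the program encoded in $I$ on whichever input the variant (E, S, or G) prescribes. If (a) produces a ``yes'' or ``no'', $T'$ returns that answer; if (b) halts first, $T'$ returns ``yes''; if neither event ever occurs, $T'$ runs forever.

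The verification then amounts to checking the four clauses one by one. For easy ``yes''-instances of $T$: $T$ eventually answers ``yes'', and $T'$ answers ``yes'' as well (possibly earlier, via branch (b), which is still correct because the instance is halting); so every easy ``yes''-instance of $T$ is an easy ``yes''-instance of $T'$. For easy ``no''-instances of $T$: correctness of $T$ implies the program does not halt, hence branch (b) runs forever, and $T'$ answers ``no'' exactly when $T$ does. For hard halting instances of $T$: branch (b) eventually halts, so $T'$ answers ``yes'', meaning $T'$ has no hard halting instances. For hard non-halting instances of $T$: branch (b) never halts and $T$ never answers, so $T'$ fails to halt on precisely those instances; combined with the previous points, $T'$ and $T$ have exactly the same hard non-halting instances.

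The only real subtlety is the dovetailed execution: we must simulate one computational step of $T$ on $I$, then one step of the program on the appropriate input, alternating indefinitely, and stop as soon as either produces a halting event of the relevant kind. This is a standard construction available in every reasonable model of computation, so it presents no obstacle. Everything else is a routine case analysis, which is why I expect no genuinely hard step; the whole content of the proposition is essentially the observation that ``the program halts'' is semidecidable, so one can always enlarge the ``yes''-set of a generic-case tester to include \emph{every} halting instance without disturbing its behaviour on non-halting ones.
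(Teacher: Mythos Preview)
Your proof is correct and follows exactly the same approach as the paper: dovetail the original tester with a simulation of the instance, returning the tester's answer if it replies and ``yes'' if the simulation halts. The paper's proof is terser and omits the case analysis you spell out, but the construction is identical.
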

\begin{proof}
In parallel with the original tester, the instance is simulated.
(In Turing machine terminology, parallel simulation is called
``dovetailing''.)
If the original tester replies something, the simulation is aborted.
If the simulation halts, the original tester is aborted and the reply ``yes''
is returned.
\end{proof}

\begin{proposition}\label{P:finite}
For any $i \in \mathbb{N}$ and tester $T$, there is a tester $T_i$ that
answers correctly ``yes'' or ``no'' for all instances of size at most $i$, and
similarly to $T$ for bigger instances.
\end{proposition}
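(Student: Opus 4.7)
The plan is to exploit two facts: the set of instances of size at most $i$ is finite, and the proposition demands only the existence of $T_i$, not an effective procedure that produces it from $T$ and $i$. In each variant E, S, and G, instances are finite strings over a fixed finite alphabet, so there are only finitely many instances $I_1, \ldots, I_N$ of size at most $i$. Each such $I_j$ has a determinate correct answer, ``yes'' or ``no'', call it $a_j$, even though computing $a_j$ from $I_j$ is in general impossible.

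The intended construction is a simple dispatcher. On input $I$, the tester $T_i$ first checks whether the size of $I$ exceeds $i$. If it does, $T_i$ simulates $T$ on $I$ and echoes whatever $T$ does, be it returning ``yes'', ``no'', or ``I don't know'', or looping forever, depending on which family $T$ belongs to. If the size of $I$ is at most $i$, then $T_i$ matches $I$ against the hardcoded list $I_1, \ldots, I_N$ and outputs the associated hardcoded answer $a_j$. Because behaviour on instances larger than $i$ is copied verbatim from $T$, a single construction simultaneously covers three-way, generic-case, and approximating testers.

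The only real obstacle is non-constructivity: the table $a_1, \ldots, a_N$ cannot in general be produced from $T$ and $i$ algorithmically, since having it would decide halting for all small instances. I would dispose of this by observing that among the $2^N$ programs obtained by varying the hardcoded answer vector, exactly one uses the true vector, and that program is a perfectly well-defined finite algorithm even though we cannot in general identify which of the $2^N$ candidates it is. Since the proposition asserts only existence, this suffices; and correctness on instances of size at most $i$ is immediate from the way the true vector was chosen.
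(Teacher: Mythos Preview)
Your proposal is correct and follows essentially the same approach as the paper: exploit the finiteness of instances of size at most $i$ to posit a hardcoded lookup table of correct answers, dispatch to $T$ on larger instances, and note that existence (not effective constructibility) of the correct table is all that is required. The paper's proof is terser but makes exactly the same moves.
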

\begin{proof}
Because there are only finitely many instances of size at most $i$, there is a
finite bit string that lists the correct answers for them.
If $n \leq i$, $T_i$ picks the answer from it and otherwise calls $T$.
(We do not necessarily know what bit string is the right one, but that does
not rule out its existence.)
\end{proof}

\subsection{Notation}\label{S:notation}

We use $\Sigma$ to denote the set of characters that are used for writing
programs and their inputs.
It is finite and has at least two elements.
There are $|\Sigma|^n$ character strings of size $n$.
If $\alpha$ and $\beta$ are in $\Sigma^*$, then $\alpha \sqsubseteq \beta$
denotes that $\alpha$ is a prefix of $\beta$, and $\alpha \sqsubset \beta$
denotes proper prefix.
The size of $\alpha$ is denoted with $|\alpha|$.

A set $A$ of finite character strings is \emph{self-delimiting} if and only if
membership in $A$ is decidable and no member of $A$ is a proper prefix of a
member of $A$.
The \emph{shortlex ordering} of any set of finite character strings is
obtained by sorting the strings in the set primarily according to their sizes
and strings of the same size in the lexicographic order.

Not necessarily all elements of $\Sigma^*$ are programs.
The set of programs is denoted with $\Pi$, and the set of all (not necessarily
proper) prefixes of programs with $\Gamma$.
So $\Pi \subseteq \Gamma$.
For tester variants E and S, we use $p(n)$ to denote the number of programs of
size $n$.
Then $p(n) = |\Sigma^n \cap \Pi|$.
For tester variant G, $p(n)$ denotes the number of program--input pairs of
joint size $n$.
We will later discuss how the program and its input are paired into a single
string.
The numbers of halting and non-halting (a.k.a. diverging) instances of size
$n$ are denoted with $h(n)$ and $d(n)$, respectively.
We have $p(n) = h(n) + d(n)$.

If $T$ is a tester, then $\eh(n)$, $\hh(n)$, $\ed(n)$, and $\hd(n)$ denote the
number of its easy halting, hard halting, easy non-halting, and hard
non-halting instances of size $n$, respectively.
Obviously $\eh(n) + \hh(n) = h(n)$ and $\ed(n) + \hd(n) = d(n)$.
The smaller $\hh(n)$ and $\hd(n)$ are, the better the tester is.
The \emph{failure rate} of $T$ is $(\hh(n) + \hd(n))/p(n)$.

When referring to all instances of size at most $n$, we use capital letters.
So, for example, $P(n) = \sum_{i=0}^n p(i)$ and $\HD(n) = \sum_{i=0}^n
\hd(i)$.

\section{Related Work}\label{S:related}

\subsection{Early Results by Lynch}

Nancy Lynch~\cite{Lyn74} used \emph{G\"odel numberings} for discussing
programs.
In essence, it means that each program has at least one index number (which is
a natural number) from which the program can be constructed, and each natural
number is the index of some program.

Although the index of an individual program may be smaller than the index of
some shorter program, the overall trend is that indices grow as the size of
the programs grows, because otherwise we would run out of small numbers.
On the other hand, if the mapping between the programs and indices is 1--1,
then the growth cannot be faster than exponential.
This is because $p(n) \leq |\Sigma|^n$.
With real-life programming languages, the growth is exponential, but (as we
will see in Section~\ref{S:C++model}) the base of the exponent may be smaller
than $|\Sigma|$.

To avoid confusion, we refrain from using the notation $\HH$, etc., when
discussing results in~\cite{Lyn74}, because the results use indices instead of
sizes of programs, and their relationship is not entirely straightforward.
Fortunately, some results of~\cite{Lyn74} can be immediately applied to
programming languages by using the \emph{shortlex G\"odel numbering}.
The shortlex G\"odel number of a program is its index in the shortlex ordering
of all programs.

The first group of results of~\cite{Lyn74} reveals that a wide variety of
situations may be obtained by spreading the indices of all programs sparsely
enough and then filling the gaps in a suitable way.
For instance, with one G\"odel numbering, for each three-way tester, the
proportion of hard instances among the first $i$ indices approaches $1$ as $i$
grows.
With another G\"odel numbering, there is a three-way tester such that the
proportion approaches $0$ as $i$ grows.
There even is a G\"odel numbering such that as $i$ grows, the proportion
oscillates in the following sense: for some three-way tester, it comes
arbitrarily close to $0$ infinitely often and for each three-way tester, it
comes arbitrarily close to $1$ infinitely often.

In its simplest form, spreading the indices is analogous to defining a new
language SpaciousC++ whose syntax is identical to that of C++ but the
semantics is different.
If the first $\lfloor n/2 \rfloor$ characters of a SpaciousC++ program of size
$n$ are space characters, then the program is executed like a C++ program,
otherwise it halts immediately.
This does not restrict the expressiveness of the language, because any C++
program can be converted to a similarly behaving SpaciousC++ program by adding
sufficiently many space characters to its front.
However, it makes the proportion of easily recognizable trivially halting
instances overwhelm.
A program that replies ``yes'' if there are fewer than $\lfloor n/2 \rfloor$
space characters at the front and ``I don't know'' otherwise, is a three-way
tester.
Its proportion of hard instances vanishes as the size of the program grows.

As a consequence of this and Proposition~\ref{P:finite}, one may choose any
failure rate above zero and there is a three-way tester for SpaciousC++
programs with at most that failure rate.
Of course, this result does not tell anything about how hard it is to test the
halting of interesting programs.
This is the first example in this publication of what we call \emph{an anomaly
stealing the result}.
That is, a proof of a theorem goes through for a reason that has little to do
with the phenomenon we are interested in.

Indeed, the first results of~\cite{Lyn74} depend on using unnatural G\"odel
numberings.
They do not tell what happens with untampered programming languages.
Even so, they rule out the possibility of a simple and powerful general
theorem that applies to all models of computation.
They also make it necessary to be careful with the assumptions that are made
about the programming language.

To get sharper results, \emph{optimal G\"odel numberings} were discussed
in~\cite{Lyn74}.
They do not allow distributing programs arbitrarily.
A G\"odel numbering is optimal if and only if for any G\"odel numbering, there
is a computable function that maps it to the former such that the index never
grows more than by a constant factor.\footnote
{The definition in~\cite{Lyn74} seems to say that the function must be a
bijection.
We believe that this is a misprint, because each proof in~\cite{Lyn74} that
uses optimal G\"odel numberings obviously violates it.}
The most interesting sharper results are opposite to what was obtained without
the optimality assumption.
To apply them to programming languages, we first define a programming language
version of optimal G\"odel numberings.

\begin{definition}\label{D:eof-data}
A programming language is \emph{end-of-file data segment}, if and only if each
program consists of two parts in the following way.
The first part, called the \emph{actual program}, is written in a
self-delimiting language (so its end can be detected).
The second part, called the \emph{data segment}, is an arbitrary character
string that extends to the end of the file.
The language has a construct via which the actual program can read the
contents of the data segment.
\end{definition}
The data segment is thus a data literal in the program, packed with maximum
density.
It is not the same thing as the input to the program.

\begin{corollary}\label{C:Lyn6}
For each end-of-file data segment language,
$$
\exists c > 0: \exists T \in \threeway{S}: \forall n \in \mathbb{N}:
\frac{\EH(n) + \ED(n)}{P(n)} \geq c\textrm{ and}\hspace{13mm}
$$
$$
\exists c > 0: \forall T \in \threeway{S}: \exists n_T \in \mathbb{N}:
\forall n \geq n_T: \frac{\HH(n) + \HD(n)}{P(n)} \geq c\textrm{ .}
$$
\end{corollary}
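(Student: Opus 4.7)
For the first claim I would exhibit a simple three-way tester that recognizes an exponentially dense family of easy instances. Fix a short actual program $P_0$ that does nothing and immediately returns; under the end-of-file data segment assumption, $P_0$ halts no matter what its data segment or input is. Define the tester $T$ so that it peels off the (self-delimiting, hence decidable) actual-program part of its input, answers ``yes'' if that part equals $P_0$, and answers ``I don't know'' otherwise. Every $P_0\cdot d$ with $d\in\Sigma^*$ is then an easy ``yes'' instance, so of size $n$ there are $|\Sigma|^{n-|P_0|}$ such easy instances whenever $n\geq|P_0|$. A short geometric-series computation using $p(n)\leq|\Sigma|^n$ gives $(\EH(n)+\ED(n))/P(n)\geq c$ for some $c$ depending only on $|P_0|$ and $|\Sigma|$, and Proposition~\ref{P:finite} patches $T$ on the finitely many shorter instances so that the bound holds for every $n$.

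For the second claim I would combine diagonalization with data-segment padding. Given a three-way tester $T$, let $Q_T$ be an actual program that on input $P$ calls $T$ on $P$, diverges if $T$ answers ``yes'', and halts otherwise; $Q_T$ is arranged so that it never consults its own data segment. Then for every $e\in\Sigma^*$ the padded program $Q_T\cdot e$ runs exactly the same computation as $Q_T$ on any given input. Feeding it its own text (variant S) gives the standard diagonal contradiction: $T(Q_T\cdot e)=$ ``yes'' would force $Q_T\cdot e$ to diverge on itself and $T(Q_T\cdot e)=$ ``no'' would force it to halt, each contradicting correctness of $T$; hence $T(Q_T\cdot e)$ must be ``I don't know'' and $Q_T\cdot e$ is hard for $T$. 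This yields at least $|\Sigma|^{n-|Q_T|}$ hard instances of each size $n\geq|Q_T|$, and a further geometric-series comparison with $P(n)\leq\sum_{i\leq n}|\Sigma|^i$ produces a positive lower bound on $(\HH(n)+\HD(n))/P(n)$ for all $n\geq n_T$.

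The delicate point, and the main obstacle I anticipate, is that the second statement places $\exists c$ outside $\forall T$, requiring $c$ to be independent of $T$, whereas the direct construction above yields a constant of order $|\Sigma|^{-|Q_T|}$ that shrinks as $T$ grows. One natural strategy for lifting this to a uniform bound is to replace $Q_T$ by a fixed universal interpreter $W$ that reads a self-delimiting description of $T$ from its data segment and then carries out the diagonalization, so that the ``fixed part'' of every hard instance has size bounded by the language constant $|W|$; one would then try to push the remaining $|T|$-dependence into the threshold $n_T$ rather than into $c$. Making this trade precise, and deciding whether the claimed uniformity in $T$ is really attainable or whether the intended quantifier order is $\forall T:\exists c_T$, is the step I expect to require the most care.
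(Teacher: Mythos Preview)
Your approach differs substantially from the paper's. The paper does not argue either claim directly; instead it shows that the shortlex numbering of any end-of-file data segment language is an \emph{optimal G\"odel numbering} in Lynch's sense (by encoding an index $i$ of an arbitrary G\"odel numbering in the data segment and letting a fixed actual program simulate the $i$th program, so that the shortlex index is at most a constant times $i$), and then invokes Proposition~6 of~\cite{Lyn74} as a black box for both conclusions.

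Your argument for the first claim is a correct direct proof, and it has the minor advantage of not needing readability of the data segment at all: it already works for end-of-file \emph{dead} segment languages.

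For the second claim, the gap you identify is genuine and your proposed fix does not close it. Placing a self-delimiting description of $T$ in the data segment of a fixed universal interpreter $W$ produces hard instances of the form $W\cdot\langle T\rangle\cdot e$, and the free padding $e$ contributes only $|\Sigma|^{\,n-|W|-|\langle T\rangle|}$ programs of size $n$; the resulting lower bound on the proportion is of order $|\Sigma|^{-|W|-|\langle T\rangle|}$, which still depends on $T$ through $|\langle T\rangle|$. Shifting the $T$-dependence entirely into the threshold $n_T$ while keeping $c$ uniform is precisely the content of Lynch's Proposition~6 for optimal G\"odel numberings, and its proof is not a triviality. To complete your route you would essentially need to reprove that proposition; the paper sidesteps this by establishing optimality of the shortlex numbering (here the readability of the data segment is used in an essential way) and then citing Lynch.
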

\begin{proof}
Let $\mathcal{L}$ be the end-of-file data segment language, and let
$\mathcal{G}$ be any G\"odel numbering.
Consider the following program $P$ in $\mathcal{L}$.
Let $a$ and $d$ be the sizes of its actual program and data segment.
The actual program reads the data segment, interpreting its content as a
number $i$ in the range from $\frac{|\Sigma|^d-1}{|\Sigma|-1}+1$ to
$\frac{|\Sigma|^{d+1}-1}{|\Sigma|-1}$.
Then it simulates the $i$th program in $\mathcal{G}$.
The shortlex index of $P$ is at most $i' = \sum_{j=0}^{a+d} |\Sigma|^j \leq
|\Sigma|^{a+d+1}$.
We have $\frac{|\Sigma|^d-1}{|\Sigma|-1} + 1 \leq i$, yielding $|\Sigma|^d - 1
\leq |\Sigma|i - i - |\Sigma| + 1$, so $|\Sigma|^d \leq |\Sigma|i$, thus $i'
\leq |\Sigma|^{a+2}i$.
The shortlex numbering of $\mathcal{L}$ is thus an optimal G\"odel numbering.
From this, Proposition~6 in~\cite{Lyn74} gives the claims.
\end{proof}
A remarkable feature of the latter result compared to many others in this
publication is that $c$ is chosen before $T$.
That is, there is a positive constant that only depends on the programming
language (and not on the choice of the tester) such that all testers have at
least that proportion of hard instances, for any big enough $n$.
On the other hand, the proof depends on the programming language allowing to
pack raw data very densely.
Real-life programming languages do not satisfy this assumption.
For instance, C++ string literals \texttt{"\ldots"} cannot pack data densely
enough, because the representation of \texttt{"} inside the literal (e.g.,
\texttt{\char`\\"} or \texttt{\char`\\042}) requires more than one character.

Because of Proposition~\ref{P:finite}, ``$\exists n_T \in \mathbb{N}$'' cannot
be moved to the front of ``$\forall T \in \threeway{S}$''.

The result cannot be generalized to $\hh$, $\hd$, and $p$, because the
following anomaly steals it.
We can change the language by first adding \texttt{1} or \texttt{01} to the
beginning of each program $\pi$ and then declaring that if the size of
\texttt{1}$\pi$ or \texttt{01}$\pi$ is odd, then it halts immediately,
otherwise it behaves like $\pi$.
This trick does not invalidate optimality but introduces infinitely many sizes
for which the proportion of hard instances is $0$.

\subsection{Results on Domain-Frequent Programming Languages}\label{S:KSZ05}

In~\cite{KSZ05}, the halting problem was analyzed in the context of
programming languages that are \emph{frequent} in the following sense:

\begin{definition}\label{D:frequent}
A programming language is (a) \emph{frequent} (b) \emph{domain-frequent}, if
and only if for every program $\pi$, there are $n_\pi \in \mathbb{N}$ and
$c_\pi > 0$ such that for every $n \geq n_\pi$, at least $c_\pi p(n)$ programs
of size $n$ (a) compute the same partial function as $\pi$ (b) halt on
precisely the same inputs as $\pi$.
\end{definition}
Instead of ``frequent'', the word ``dense'' was used in~\cite{KSZ05}, but we
renamed the concept because we felt ``dense'' a bit misleading.
The definition says that programs that compute the same partial function are
common.
However, the more common they are, the less room there is for programs that
compute other partial functions, implying that the smallest programs for each
distinct partial function must be distributed more sparsely.
``Dense'' was used for domain-frequent in~\cite{ScJ99}.

Any frequent programming language is obviously domain-frequent but not
necessarily vice versa.
On the other hand, even if a theorem in this field mentions frequency as an
assumption, the odds are that its proof goes through with domain-frequency.
Whether a real-life programming language such as C++ is domain-frequent, is
surprisingly difficult to find out.
We will discuss this question briefly in Section~\ref{S:frequent}.

\newcommand{\BF}{\textsf{BF}}
As an example of a frequent programming language, \BF\ was mentioned
in~\cite{KSZ05}.
Its full name starts with ``brain'' and then contains a word that is widely
considered inappropriate language, so we follow the convention
of~\cite{KSZ05} and call it \BF.
Information on it can be found on Wikipedia under its real name.
It is an exceptionally simple programming language suitable for recreational
and illustrational but not for real-life programming purposes.
In essence, \BF\ programs describe Turing machines with a read-only input
tape, write-only output tape, and one work tape.
The alphabet of each tape is the set of 8-bit bytes.
However, \BF\ programs only use eight characters.

As a side issue, a non-trivial proof was given in~\cite{KSZ05} that only a
vanishing proportion of character strings over the eight characters are \BF\
programs.
That is, $\lim_{n \to \infty} p(n)/8^n$ exists and is $0$.
It trivially follows that if all character strings over the 8 characters are
considered as instances and failure to compile is considered as non-halting,
then the proportion of hard instances vanishes as $n$ grows.

The only possible compile-time error in \BF\ is that the square brackets
\texttt{[} and \texttt{]} do not match.
Most, if not all, real-life programming languages have parentheses or brackets
that must match.
So it seems likely that compile-time errors dominate also in the case of most,
if not all, real-life programming languages.
Unfortunately, this is difficult to check rigorously, because the syntax and
other compile-time rules of real-life programming languages are complicated.
Using another, simpler line of argument, we will prove the result for both C++
and \BF\ in Section~\ref{S:syntax}.

In any event, if the proportion of hard instances among all character strings
vanishes because the proportion of programs vanishes, that is yet another
example of an anomaly stealing the result.
It is uninteresting in itself, but it rules out the possibility of interesting
results about the proportion of hard instances of size $n$ among all character
strings of size $n$.
Therefore, from now on, excluding Section~\ref{S:syntax}, we focus on the
proportion of hard instances among all programs or program--input pairs.

In the case of program--input pairs, the results may be sensitive to how the
program and its input are combined into a single string that is used as the
input of the tester.
To avoid anomalous results, it was assumed in~\cite{KSZ05,ScJ99} that this
``pairing function'' has a certain property called ``pair-fair''.
The commonly used function $x + (x+y)(x+y+1)/2$ is pair-fair.
To use this pairing function, strings are mapped to numbers and back via their
indices in the shortlex ordering of all finite character strings.

A proof was sketched in~\cite{ScJ99} that, assuming domain-frequency and
pair-fairness, 
$$\forall T \in \apprx{G}: \exists c_T > 0: \exists n_T \in \mathbb{N}:
\forall n \geq n_T: \frac{\hh(n)+\hd(n)}{p(n)} \geq c_T \textrm{ .}$$
That is, the proportion of wrong answers does not vanish.
However, this leaves open the possibility that for any failure rate $c > 0$,
there is a tester that fares better than that for all big enough $n$.
This possibility was ruled out in~\cite{KSZ05}, assuming frequency and
pair-fairness.
(It is probably not important that frequency instead of domain-frequency was
assumed.)
That is, there is a positive constant such that for any tester, the proportion
of wrong answers exceeds the constant for infinitely many sizes of instances:

\begin{equation}\label{E:KSZ05}\!\!
\exists c > 0: \forall T \in \apprx{G}: \forall n_0 \in \mathbb{N}: \exists
n \geq n_0: \frac{\hh(n)+\hd(n)}{p(n)} \geq c \textrm{ .}
\end{equation}
The third main result in~\cite{KSZ05}, adapted and generalized to the present
setting, is the following.
We present its proof to obtain the generalization and to add a detail that the
proof in~\cite{KSZ05} lacks, that is, how $T_{i,j}$ is made to halt for
``wrong sizes''.
Generic-case testers are not mentioned, because Proposition~\ref{P:gen->gen}
gave a related result for them.

\begin{theorem}\label{T:KSZ05-3}
For each programming model and variant E, S, G of the halting problem,
$$
\forall c > 0: \exists T_c \in \apprx{X}~~~~\:\,: \forall n_0 \in \mathbb{N}:
\exists n \geq n_0: \frac{\hhc{c}(n)}{p(n)} \leq c \wedge
\frac{\hdc{c}(n)}{p(n)} = 0 \textrm{ and}
$$
$$
\forall c > 0: \exists T_c \in \threeway{X}: \forall n_0 \in \mathbb{N}:
\exists n \geq n_0: \frac{\hhc{c}(n)}{p(n)} \leq c\textrm{ .}\hspace{26mm}
$$
\end{theorem}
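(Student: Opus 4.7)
The plan is to prove the three-way claim first and then derive the approximating claim as an immediate corollary of Proposition~\ref{P:3way->}: converting a three-way $T_c$ into the corresponding approximating $T_c'$ preserves the hard-halting instances and forces the hard-non-halting instances to zero, so $\hhc{c}(n)/p(n) \leq c$ at a size $n$ for the three-way tester yields both $\hhc{c}(n)/p(n) \leq c$ and $\hdc{c}(n) = 0$ at the same $n$ for the approximating variant.

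For the three-way claim I would introduce, for each $(i,j) \in \mathbb{N}^2$, the tester $T_{i,j}$ that on input $\pi$ with $n = |\pi|$ uses Proposition~\ref{P:finite} to answer correctly whenever $n \leq i$ and otherwise simulates $\pi$ for $j$ computation steps, returning ``yes'' if a halt is observed within that budget and ``I don't know'' otherwise. The $j$-step timeout makes $T_{i,j}$ total on every input --- this is exactly the detail about termination at ``wrong sizes'' that~\cite{KSZ05} leaves implicit --- and since ``yes'' is emitted only after witnessing an actual halt, $T_{i,j}$ is a legitimate three-way tester. For each fixed $n$ the count $h^{(j)}(n)/p(n)$ is non-increasing in $j$ and tends to $0$, so a smallest $j^*_c(n)$ with $h^{(j^*_c(n))}(n)/p(n) \leq c$ exists; if some constant $J$ satisfies $j^*_c(n) \leq J$ at infinitely many $n$, then $T_{0,J}$ itself witnesses the claim. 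Otherwise I would let the simulation budget grow with $n$ via a computable schedule $g$ and define $T_c(\pi)$ to dovetail all programs of size $|\pi|$ until either at least $(1-c)p(n)$ of them have been seen to halt, so that the still-uncaught halting instances number at most $c\,p(n)$, or a computable timeout fires; the answer for $\pi$ is then read off from whether $\pi$ was observed to halt in the meantime.

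The main obstacle is the ``middle regime'' $c < h(n)/p(n) < 1-c$ for cofinitely many $n$, where neither the trivial ``I don't know'' tester nor the $(1-c)$-threshold dovetail construction applies directly. Here the argument becomes a diagonal selection over the two-parameter family $\{T_{i,j}\}$: from the pointwise convergence $h^{(j)}(n) \to 0$ one extracts a computably growing schedule $g$ that exceeds $j^*_c(n)$ on an infinite set of sizes, and one uses Proposition~\ref{P:finite} to absorb any finitely many anomalies so that this schedule can be hardcoded into the final $T_c$.
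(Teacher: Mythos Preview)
Your reduction of the approximating claim to the three-way claim via Proposition~\ref{P:3way->} is fine. The gap is in the three-way argument, precisely in the ``middle regime'' you yourself flag. Your family $T_{i,j}$ uses a \emph{time} budget $j$, and to finish you need a computable schedule $g$ with $g(n) \geq j^*_c(n)$ on an infinite set of sizes. The appeal to ``pointwise convergence $h^{(j)}(n) \to 0$'' does not deliver such a $g$: pointwise convergence in $j$ for each fixed $n$ says nothing about the rate across different $n$, and $j^*_c(n)$ depends on $h(n)$, which is not computable. Your ``diagonal selection'' is a label, not a construction; nothing you have written rules out $j^*_c$ eventually dominating every computable function, and Proposition~\ref{P:finite} only patches finitely many sizes, so it cannot rescue the argument.

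The paper sidesteps time budgets entirely by using a \emph{count} threshold together with a pigeonhole argument. With $C = \lceil 1/c \rceil$, its tester $T_{i,j}$ (now $0 \leq i \leq C$, and $j$ is a size cutoff below which the answer is ``I don't know'' resp.\ ``no'') simulates all size-$n$ instances in parallel until $\lceil i\,p(n)/C \rceil$ of them have halted, answering ``yes'' for those and ``I don't know'' for the rest. Setting $i_n = \lfloor C\,h(n)/p(n) \rfloor$, one gets $\hh(n)/p(n) < 1/C \leq c$ whenever the chosen $i$ equals $i_n$. Since $i_n$ ranges over the finite set $\{0,\ldots,C\}$, some value $i'$ occurs infinitely often; taking the \emph{smallest} such $i'$ ensures $i_n \geq i'$ for all $n$ beyond some $j$, so the count threshold is always attained and $T_{i',j}$ halts on every input. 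This finite discretisation of $h(n)/p(n)$ is the missing idea; once you have it, the time-budget formulation and the attendant schedule $g$ become unnecessary.
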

\begin{proof}
Let $C = \lceil 1/c \rceil$.
Consider the family $T_{i,j}$ of the programs of the following kind, where $i
\in \mathbb{N}$, $j \in \mathbb{N}$, and $0 \leq i \leq C$.
If $n < j$, $T_{i,j}$ answers ``no'' in the case of approximating and ``I
don't know'' in the case of three-way testers.
If $n \geq j$, $T_{i,j}$ simulates all instances of size $n$ until $\lceil i
p(n) / C \rceil$ of them have halted.
If the simulation stage terminates, then if the given instance is among those
that halted, $T_{i,j}$ answers ``yes'', otherwise $T_{i,j}$ answers ``no'' or
``I don't know''.
Thus an approximating $T_{i,j}$ has $\hdc{i,j}(n) = 0$.

We prove next that some $T_{i,j}$ is the required tester.
Let $i_n = \lfloor C h(n) / p(n) \rfloor$.
Then $i_n p(n) / C \leq h(n) < (i_n+1) p(n) / C$.
When $n \geq j$, the simulation stage of $T_{i_n,j}$ terminates and the
proportion of hard halting instances of $T_{i_n,j}$ is less than $1 / C \leq
c$.
Some $0 \leq i \leq C$ is the $i_n$ for infinitely many values of $n$.
Furthermore, there is a smallest such $i$.
We denote it with $i'$.
There also is a $j$ such that when $n \geq j$, then $i_n \geq i'$.
With these choices, $T_{i',j}$ always halts.
\end{proof}
For a small enough $c$ and the approximating tester $T_c$ in
Theorem~\ref{T:KSZ05-3},~(\ref{E:KSZ05}) implies that the failure rate of
$T_c$ oscillates, that is, does not approach any limit as $n \to \infty$.
This observation is directly obtainable from Lemma~23 in~\cite{KSZ05}.

\subsection{Results on Turing Machines}

For Turing machines with one-way infinite tape and randomly chosen transition
function, the probability of falling off the left end of the tape before
halting or repeating a state approaches $1$ as the number of states
grows~\cite{HaM06}.
The tester simulates the machine until it falls off the left end, halts, or
repeats a state.
If falling off the left end is considered as halting, then the proportion of
hard instances vanishes as the size of the machine grows.
This can be thought of as yet another example of an anomaly stealing the
result.

Formally, $\exists T \in \threeway{X}: \lim_{n \to \infty} (\hh(n)+\hd(n)) /
p(n) = 0$, that is,
$$\exists T \in \threeway{X}: \forall c > 0: \exists n_c \in \mathbb{N}:
\forall n \geq n_c: \frac{\hh(n)+\hd(n)}{p(n)} \leq c\textrm{ .}$$
Here X may be E, S, or G.
Although E was considered in~\cite{HaM06}, the proof also applies to S and G.
Comparing the result to Theorem~\ref{T:itself} in Section~\ref{S:frequent}
reveals that the representation of programs as transition functions of Turing
machines is not domain-frequent.

On the other hand, independently of the tape model, the proportion does not
vanish exponentially fast~\cite{Ryb07}.
Like in~\cite{HaM06}, the proportion is computed on the transition functions,
and not on some textual representations of the programs.
The proof relies on the fact that any Turing machine has many obviously
similarly behaving copies of bigger and bigger sizes.
They are obtained by adding new states and transitions while keeping the
original states and transitions intact.
So the new states and transitions are unreachable.
They are analogous to dead code.
These copies are not common enough to satisfy Definition~\ref{D:frequent}, but
they are common enough to rule out exponentially fast vanishing.
Generic-case decidability was used in~\cite{Ryb07}, but the result applies
also to three-way testers by Proposition~\ref{P:3way->}.

The results in~\cite{CaS08} are based on using weighted running times.
For every positive integer $k$, the proportion of halting programs that do not
halt within time $k+c$ is less than $2^{-k}$, simply because the proportion of
times greater than $k+c$ is less than $2^{-k}$.
The publication presents such a weighting that $c$ is a computable constant.

Assume that programs are represented as self-delimiting bit strings on the
input tape of a universal Turing machine.
The smallest three-way tester of variant E that answers ``yes'' or ``no'' up
to size $n$ and ``I don't know'' for bigger programs, is of size $n \pm
O(1)$~\cite{Val12}.

\section{Programming Languages with Assumptions}\label{S:proglang}

\subsection{Domain-Frequent Languages}\label{S:frequent}

The assumption that the programming language is domain-frequent
(Definition~\ref{D:frequent}) makes it possible to use a small variation of
the standard proof of the non-existence of halting testers, to prove that each
halting tester of variant S has a non-vanishing set of hard instances.
For three-way and generic-case testers, one can also say something about
whether the hard instances are halting or not.
Despite its simplicity, as far as we know, the following result has not been
presented in the literature.
However,~see the comment on~\cite{ScJ99} in Section~\ref{S:KSZ05}.

\begin{theorem}\label{T:itself}
If the programming language is domain-frequent, then
$$
\forall T \in \threeway{S}: \exists c_T > 0: \exists n_T \in \mathbb{N}:
\forall n \geq n_T: \frac{\hh(n)}{p(n)} \geq c_T \wedge \frac{\hd(n)}{p(n)}
\geq c_T \textrm{ ,}
$$
$$
\forall T \in \generic{S}: \exists c_T > 0: \exists n_T \in \mathbb{N}:
\forall n \geq n_T: \frac{\hd(n)}{p(n)} \geq c_T \textrm{ , and\hspace{11mm}}
$$
$$
\forall T \in \apprx{S}: \exists c_T > 0: \exists n_T \in \mathbb{N}:
\forall n \geq n_T: \frac{\hh(n)+\hd(n)}{p(n)} \geq c_T
\textrm{ .\hspace{4mm}}
$$
\end{theorem}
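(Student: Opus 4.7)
The plan is to scale up the classical diagonal argument against halting deciders. The classical proof produces one bad program; here, combined with domain-frequency, it yields a constant fraction of bad programs of every sufficiently large size.

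For three-way testers I use two diagonal programs. Let $D_H$ be a program that on input $x$ runs $T(x)$, loops forever if the reply is ``yes'', and otherwise halts. Let $D_D$ be a program that on input $x$ runs $T(x)$ and halts iff the reply is ``no''. By domain-frequency applied to $D_H$ and $D_D$, for all sufficiently large $n$ at least a constant fraction of the size-$n$ programs halt on exactly the same inputs as $D_H$, and similarly for $D_D$. For any such copy $D'$ of $D_H$, the self-computation $D'(D')$ halts iff $T(D') \neq $ ``yes''; a ``yes'' reply forces $D'(D')$ both to halt (by three-way correctness) and to diverge (by the definition of $D'$'s domain), and a ``no'' reply forces the opposite contradiction; so $T(D') = $ ``I don't know'' and $D'(D')$ halts, making $D'$ a hard halting instance. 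The analogous case analysis for copies of $D_D$ shows they are hard non-halting instances. Taking the minimum of the two fractions and the maximum of the two thresholds yields the $c_T, n_T$ of the first claim.

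For generic-case testers only $D_D$ is needed. If $T(D')$ fails to halt, $D'$ is hard by definition, and $D'(D')$ diverges because $D'$ halts on $x$ iff $T(x)$ halts with ``no''; so $D'$ is a hard non-halting instance. If $T(D')$ does halt, the same contradictions as in the three-way case rule out both possible answers. For approximating testers, which must always answer, $D_D$ forces a dichotomy on each copy $D'$: a ``yes'' reply gives a hard non-halting instance because $D'(D')$ diverges, and a ``no'' reply gives a hard halting instance because $D'(D')$ halts; either way $D'$ is hard, which yields the combined bound on $\hh(n) + \hd(n)$.

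The one subtle point is that domain-frequency only guarantees the copies halt on the same inputs as the diagonal program, not that they compute the same function on those inputs. The entire argument is therefore arranged so that only the halting behavior on self-input is probed, and the actual computed values never matter. A minor concern is that $D_H$ and $D_D$ must be expressible in the programming language at hand, but any language able to simulate $T$ and branch on its reply handles this routinely.
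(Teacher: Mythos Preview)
Your proof is correct and essentially identical to the paper's: the paper builds a single diagonal program $P_T$ that loops on ``yes'' and halts on ``no'', then handles the ``I don't know'' branch two ways (halt for the $\hh$-bound, loop for the $\hd$-bound), which is exactly your $D_H$ and $D_D$; the generic-case and approximating cases are likewise argued the same way, and the paper also combines the two constants by taking the smaller $c_T$ and larger $n_T$. Your closing remark that only halting behaviour on self-input is probed is precisely the reason domain-frequency (rather than full frequency) suffices, matching the paper's intent.
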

\begin{proof}
Let the execution of $X$ with an input $y$ be denoted with $X(y)$.
For any $T$, consider the program $P_T$ that first tries its input $x$ with
$T$.
If $T(x)$ replies ``yes'', then $P_T(x)$ enters an eternal loop.
If $T(x)$ replies ``no'', then $P_T(x)$ halts immediately.
The case that $T(x)$ replies ``I don't know'' is discussed below.
If $T(x)$ fails to halt, then $P_T(x)$ cannot continue and thus also fails to
halt.

By the definition of domain-frequent, there are $c_T > 0$ and $n_T \in
\mathbb{N}$ such that when $n \geq n_T$, at least $c_T p(n)$ programs halt on
precisely the same inputs as $P_T$.
Let $P'$ be any such program.
Consider $P_T(P')$.
If $T(P')$ answers ``yes'', then $P_T(P')$ fails to halt.
Then also $P'(P')$ fails to halt.
Thus ``yes'' cannot be the correct answer for $T(P')$.
A similar reasoning reveals that also ``no'' cannot be the correct answer for
$T(P')$.
So $P'$ is a hard instance for $T$.

Nothing more is needed to prove the claim for approximating testers.
In the case of generic-case testers, the hard instances make $T$ and thus
$P_T$ fail to halt, so they are non-halting instances.

In the case of three-way testers, all hard instances can be made halting
instances by making $P_T$ halt when $T$ replies ``I don't know''.
This proves the claim $\hh(n)/p(n) \geq c_T$.
The claim $\hd(n)/p(n) \geq c_T$ is proven by making $P_T$ enter an eternal
loop when $T$ replies ``I don't know''.
These two proofs may yield different $c_T$ values, but the smaller one of them
is suitable for both.
Similarly, the bigger of their $n_T$ values is suitable for both.
\end{proof}

The second claim of Theorem~\ref{T:itself} lacks a $\hh(n)$ part.
Indeed, Proposition~\ref{P:gen->gen} says that with generic-case testers,
$\hh(n)$ can be made $0$.
With approximating testers, $\hh(n)$ can be made $0$ at the cost of $\hd(n)$
becoming $d(n)$, by always replying ``yes''.
Similarly, $\hd(n)$ can be made $0$ by always replying ``no''.

The next theorem applies to testers of variant E and presents some results
similar to Theorem~\ref{T:itself}.
To our knowledge, it is the first theorem of its kind that applies to the
halting problem on the empty input.
It assumes not only that many enough equivalent copies exist but also that
they can be constructed.
On the other hand, its equivalence only pays attention to the empty input.

\begin{definition}\label{D:cdf}
A programming language is \emph{computably empty-frequent} if and only if
there is a decidable equivalence relation ``$\,\approx$'' between programs
such that
\begin{itemize}
\item for each program $\pi$, there are $c_\pi > 0$ and $n_\pi \in \mathbb{N}$
such that for every $n \geq n_\pi$, at least $c_\pi p(n)$ programs of size $n$
are equivalent to $\pi$, and
\item for each programs $\pi$ and $\pi'$, if $\pi \approx \pi'$, then either
both or none of $\pi$ and $\pi'$ halt on the empty input.
\end{itemize}
If $\pi \approx \pi'$, we say that $\pi'$ is a \emph{cousin} of $\pi$.
\end{definition}
It can be easily seen from~\cite{KSZ05} that \BF\ is computably
empty-frequent.

\begin{theorem}\label{T:ill-d-R1}
If the programming language is computably empty-frequent, then
$$\forall T \in \threeway{E}: \exists c_T > 0: \exists n_T \in \mathbb{N}:
\forall n \geq n_T: \frac{\hd(n)}{p(n)} \geq c_T\textrm{ .}$$
The result also holds for generic-case testers but not for approximating
testers.
\end{theorem}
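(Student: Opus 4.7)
The approach mirrors the diagonal argument of Theorem~\ref{T:itself}, redesigned for variant~E where the tester receives only a program and not a separate input. Since varying the input is no longer available, I would use Kleene's second recursion theorem to build a single program $P_T$ that, when run on the empty input, consults $T$ on every cousin of itself and arranges to contradict every would-be correct verdict via the cousin property. Computable empty-frequency (Definition~\ref{D:cdf}) supplies both the decidable equivalence ``$\approx$'' that $P_T$ needs internally and the lower bound on the number of cousins of $P_T$ that upgrades a single hard instance into a non-vanishing proportion.

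Concretely, $P_T$ on the empty input recovers its own source code and dovetails the simulations of $T(\pi')$ over all programs $\pi'$ with $\pi' \approx P_T$. As soon as some simulation emits ``yes'', $P_T$ enters an infinite loop; as soon as some emits ``no'', $P_T$ halts; ``I don't know'' answers are ignored. Suppose $T(\pi')$ replies ``yes'' for some cousin $\pi'$: the dovetail will eventually discover this, so $P_T$ diverges on the empty input, and by $\pi' \approx P_T$ also $\pi'$ diverges, contradicting the ``yes''. The symmetric argument rules out ``no''. Hence $T$ answers ``I don't know'' on every cousin of $P_T$; the dovetail never finds a definitive verdict; $P_T$ really does diverge on the empty input; and every cousin is thus a non-halting instance that is hard for $T$. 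Computable empty-frequency then gives at least $c_{P_T}\,p(n)$ such instances of size $n$ for all $n \geq n_{P_T}$, which is the required bound on $\hd(n)/p(n)$.

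For generic-case testers the same construction applies verbatim with ``runs forever'' playing the role of ``I don't know'': the two diagonal contradictions still rule out ``yes'' and ``no'' on every cousin, and divergence of $T$ on a cousin is exactly what makes that cousin hard. For approximating testers the statement is refuted by the trivial tester that always replies ``no'', which has $\hd(n) = 0$ for every $n$, independently of the programming language. The main obstacle I expect to handle carefully is the self-referential step: showing that $P_T$ can be built so that it effectively knows its own code and can decide cousinship of arbitrary programs with itself. Granted the decidability of ``$\approx$'' and the standard availability of recursion-theoretic fixed points in any reasonable model of computation, this is routine but deserves explicit mention, since nothing in Definition~\ref{D:cdf} spells it out.
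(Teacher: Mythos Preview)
Your proposal is correct and follows essentially the same diagonal construction as the paper: build a self-referential program $P_T$ that tests its own cousins with $T$, act contrary to any definitive verdict, and conclude that every cousin receives ``I don't know'' and is therefore a hard non-halting instance; the density clause of Definition~\ref{D:cdf} then converts this into the required lower bound, and the always-``no'' tester refutes the approximating case. The only stylistic difference is that you dovetail the cousin tests from the outset, handling three-way and generic-case testers uniformly, whereas the paper enumerates cousins sequentially for three-way testers (since such testers always halt) and introduces dovetailing only for the generic-case extension.
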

\begin{proof}
Given any three-way tester $T$, consider a program $P_T$ that behaves as
follows.
First it constructs its own code and stores it in a string variable.
Hard-wiring the code of a program inside the program is somewhat tricky, but
it is well known that it can be done.
With G\"odel numberings, the same can be obtained with Kleene's second
recursion theorem.

Then $P_T$ starts constructing its cousins of all sizes and tests each of them
with $T$.
By the assumption, there are $c_T > 0$ and $n_T \in \mathbb{N}$ such that for
every $n \geq n_T$, $P_T$ has at least $c_Tp(n)$ cousins of size $n$.
If $T$ ever replies ``yes'', then $P_T$ enters an eternal loop and thus does
not continue testing its cousins.
If $T$ ever replies ``no'', then $P_T$ halts immediately.
If $T$ replies ``I don't know'', then $P_T$ tries the next cousin.

If $T$ ever replies ``yes'', then $P_T$ fails to halt on the empty input.
By definition, also the tested cousin fails to halt on the empty input.
So the answer ``yes'' would be incorrect.
Similarly, if $T$ ever replies ``no'', that would be incorrect.
So $T$ must reply ``I don't know'' for all cousins of $P_T$.
They are thus hard instances for $T$.
Because there are infinitely many of them, $P_T$ does not halt, so they are
non-halting.

To prove the result for generic-case testers, it suffices to run the tests of
the cousins in parallel, that is, go around a loop where each test that has
been started is executed one step and the next test is started.
If any test ever replies ``yes'' or ``no'', $P_T$ aborts all tests that it has
started and then does the opposite of the reply.

A program that always replies ``no'' is an approximating tester with $\hd(n)
= 0$ for every $n \in \mathbb{N}$.
\end{proof}
The results in this section and Section~\ref{S:KSZ05} motivate the question:
are real-life programming languages domain-frequent?
For instance, is C++ domain-frequent?
Unfortunately, we have not been able to answer it.
We try now to illustrate why it is difficult.

Given any C++ program, it is easy to construct many longer programs that
behave in precisely the same way, by adding space characters, line feeds
(denoted with $\nl$), comments, or dead code such as
\texttt{if(0!=0)}\{\ldots\}.
It is, however, hard to verify that many enough programs are obtained in this
way.
For instance, it might seem that many enough programs can be constructed with
string literals.
We now provide evidence that suggests (but does not prove) that it fails.

Any program of size $n$ can be converted to $(|\Sigma|-3)^k$ identically
behaving programs of size $n+k+12$ by adding \{\texttt{char*s="$\sigma$";}\}
to the beginning of some function, where $\sigma \in (\Sigma \setminus
\{\texttt{"}, \texttt{\char`\\}, \nl \})^k$.
(The purpose of \{ and \} is to hide the variable \texttt{s}, so that it does
not collide with any other variable with the same name.)
More programs are obtained by including escape codes such as
\texttt{\char`\\"} to $\sigma$.

However, it seems that this is a vanishing instead of at least a positive
constant proportion when $k \to \infty$.
In the absence of escape codes, it certainly is a vanishing proportion.
This is because one can add \{\texttt{char*s="$\sigma$",*t="$\rho$";}\}
instead, where $|\sigma| + |\rho| = k-6$.
Without escape codes, this yields $(k-5)(|\Sigma|-3)^{k-6}$ programs.
When $k \to \infty$, $(|\Sigma|-3)^k / ((k-5)(|\Sigma|-3)^{k-6}) =
(|\Sigma|-3)^6 / (k-5) \to 0$.

That is, although string literals can represent information rather densely,
they do not constitute the densest possible way of packing information into a
C++ program (assuming the absence of escape codes).
A pair of string literals yields an asymptotically strictly denser packing.
Similarly, a triple of string literals is denser still, and so on.
Counting the programs in the presence of escape codes is too difficult, but it
seems likely that the phenomenon remains the same.

So string literals do not yield many enough programs.
It seems difficult to first find a construct that does yield many enough
programs, and then prove that it works.

\subsection{End-of-file Data Segment Languages}\label{S:eof-data}

In this section we prove a theorem that resembles Theorem~\ref{T:ill-d-R1},
but relies on different assumptions and has a different proof.

We say that a three-way tester is \emph{$n$-perfect} if and only if it does
not answer ``I don't know'' when the size of the instance is at most $n$.
The following lemma is adapted from~\cite{Val12}.

\begin{lemma}\label{L:n-O(1)}
Each programming language has a constant $e$ such that the size of each
$n$-perfect three-way tester of variant E or S is at least $n-e$.
\end{lemma}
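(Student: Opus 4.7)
The plan is to use the standard diagonalization against the tester, tracked carefully by program size. Given a three-way tester $T$ (of variant E or S) that is $n$-perfect, I will build a ``diagonal program'' $D_T$ whose behaviour on the relevant input directly contradicts $T$'s answer on $D_T$. Since $T$ is $n$-perfect, it cannot be wrong on any instance of size at most $n$, so $D_T$ must have size strictly greater than $n$. Because $D_T$ is obtained from $T$ by prepending a fixed wrapper of bounded size, we will have $|D_T|\le|T|+c$ for a constant $c$ that depends only on the programming language, and the lemma follows with $e=c-1$.

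For variant S the construction is direct: let $D_T$ be the program that, on input $x$, runs $T(x)$ and then loops forever if $T(x)$ replies ``yes'', halts immediately if $T(x)$ replies ``no'', and (say) halts if $T(x)$ replies ``I don't know''. Then by design, $D_T(D_T)$ halts if and only if $T(D_T)\in\{\text{``no''},\text{``I don't know''}\}$. If $|D_T|\le n$, then by $n$-perfectness $T(D_T)$ is a correct ``yes'' or ``no'' answer, and in either case we get a contradiction with the halting behaviour of $D_T(D_T)$. Hence $|D_T|\ge n+1$. For variant E I do the same thing, except $D_T$ must first build a copy of its own source code in a string variable (a quine, or equivalently Kleene's second recursion theorem applied inside the language) and then run $T$ on that copy, reacting oppositely in the same manner as above; now it is $D_T$ run on the empty input that contradicts $T(D_T)$, and the same size argument yields $|D_T|\ge n+1$.

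In both cases the additional code wrapped around $T$ (storing $T$, invoking it, branching on its three possible outputs, and in the variant-E case also producing the self-reference) has size bounded by a constant $c$ that depends only on the programming language, not on $T$ or $n$. Therefore $|T|+c\ge|D_T|\ge n+1$, which gives $|T|\ge n-e$ with $e=c-1$, as required.

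The main obstacle is the variant-E construction, where $D_T$ has to obtain its own source code: one must argue that the standard quine or recursion-theorem trick can be carried out using only $O(1)$ extra characters on top of a literal copy of $T$, so that the constant $c$ is genuinely independent of $T$ and $n$. Once that is granted, the bookkeeping is mechanical: count the wrapper characters, observe that $T$ appears exactly once inside $D_T$, and read off the bound.
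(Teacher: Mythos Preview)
Your variant-S argument is correct: $D_T$ merely wraps $T$ with a fixed amount of code that reads the input, calls $T$ on it, and branches on the answer, so $|D_T|\le|T|+c$ for a language-dependent constant $c$.

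The variant-E argument, however, has a genuine gap precisely where you flag ``the main obstacle''. For $D_T$ to feed its own source to $T$, that source must be reconstructible from inside $D_T$, and the standard quine or recursion-theorem constructions achieve this by storing (an encoding of) the code as data. Since that data must in particular encode $T$, one obtains $|D_T|\approx k|T|+O(1)$, where $k\ge 1$ is the blow-up factor of whatever data-literal mechanism the language offers (escaping in string literals, decimal byte arrays, and so on). For languages with $k>1$ this yields only $|T|\ge n/k-O(1)$, not $|T|\ge n-e$. Getting additive $O(1)$ overhead would require that arbitrary byte strings can be embedded verbatim as program-accessible data with only constant overhead, and the lemma is asserted for \emph{every} programming language without any such hypothesis. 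You note the obstacle but do not discharge it, and it does not discharge itself.

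The paper sidesteps self-reference entirely with a busy-beaver-style argument that works uniformly for E and S. Its program $P$ enumerates strings $x$ in shortlex order, calls $T_n(x)$, fully simulates each $x$ that $T_n$ declares halting, and stops at the first ``I don't know''. Because $T_n$ is $n$-perfect, every halting instance of size at most $n$ is simulated to completion before $P$ halts; hence $P$'s running time strictly exceeds that of every halting program of size $\le n$, so $P$ itself cannot be one and $|P|>n$. Here $P$ contains $T_n$ only as a callable subroutine---no copy of its own source is needed---so $|P|=|T_n|+O(1)$ by exactly the wrapping assumption you already use for variant S, and the lemma follows.
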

\begin{proof}
Let $T_n$ be any $n$-perfect three-way tester of variant E or S.
Consider a program $P$ that constructs character strings $x$ in shortlex order
and tests them with $T_n$ until $T_n(x)$ replies ``I don't know''.
If $T_n(x)$ replies ``yes'', $P$ simulates $x$ before trying the next
character string.
When simulating $x$, $P$ gives it the empty input in the case of variant E and
$x$ as the input in the case of S.
The reply ``I don't know'' eventually comes, because otherwise $T_n$ would be
a true halting tester.
As a consequence, $P$ eventually halts.
Before halting, $P$ simulates at least all halting programs of size at most
$n$.

The time consumption of any simulated execution is at least the same as the
time consumption of the corresponding genuine execution.
So the execution of $P$ cannot contain properly a simulated execution of $P$.
$P$ does not read any input, so it does not matter whether it is given itself
or the empty string as its input. 
Therefore, the size of $P$ is bigger than $n$.
Because the only part of $P$ that depends on $n$ is $T_n$, there is a constant
$e$ such that the size of $T_n$ is at least $n-e$.
\end{proof}

In any everyday programming language, space characters can be added freely
between tokens.
Motivated by this, we define that a \emph{blank character} is a character
that, for any program, can be added to at least one place in the program
without affecting the meaning of the program.

\begin{theorem}\label{T:ill-eof}
Let X be E or S.
If the programming language is end-of-file data segment and has a blank
character, then
$$\forall T \in \threeway{X}: \exists c_T > 0: \exists n_T \in \mathbb{N}:
\forall n \geq n_T: \frac{\hh(n)}{p(n)} \geq c_T \wedge \frac{\hd(n)}{p(n)}
\geq c_T\textrm{ .}$$
\end{theorem}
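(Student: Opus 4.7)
My plan is to adapt the diagonalisation of Theorems~\ref{T:itself} and~\ref{T:ill-d-R1} to end-of-file data segment languages, using the freedom in the data segment to furnish a positive proportion of hard cousins at each large size. Given any three-way tester $T$ of variant~E or~S, I would construct two adversary programs $P_T^h$ and $P_T^d$. Each has an actual-program part that first obtains the program's own complete code $\pi$ --- for variant~E by concatenating its hardwired actual-program text (produced via a Kleene-style self-reference) with the data segment that it reads at runtime through the language's data-segment construct, and for variant~S by simply reading the input --- and then simulates $T(\pi)$. On answer ``yes'' the program loops forever, on ``no'' it halts, and on ``I don't know'' $P_T^h$ halts while $P_T^d$ diverges. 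The blank-character hypothesis enters here as a syntactic convenience, guaranteeing that the several subroutines of the actual program can be separated and juxtaposed in the language.

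Call a cousin of $P_T^?$ (for $? \in \{h,d\}$) any string of the form (actual program of $P_T^?$) followed by an arbitrary $\beta \in \Sigma^*$. By Definition~\ref{D:eof-data}, since the actual program is self-delimiting and the data segment may be any character string, every such cousin is itself a valid program. Each cousin $\pi$ executes the same actual-program logic and therefore queries $T(\pi)$ on its own complete code. Standard diagonalisation then rules out $T(\pi) = $ ``yes'' (which would force $\pi$ to loop, contradicting the correctness of ``yes'') and $T(\pi) = $ ``no'', so $T(\pi) = $ ``I don't know'' for every cousin $\pi$; hence every cousin is hard for $T$. Cousins of $P_T^h$ halt and are therefore hard halting instances, while cousins of $P_T^d$ diverge and are hard non-halting instances.

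Let $a_?$ denote the size of the actual program part of $P_T^?$. For $n \geq a_?$ there are exactly $|\Sigma|^{n - a_?}$ cousins of $P_T^?$ of size $n$, whereas $p(n) \leq |\Sigma|^n$, so the proportion of such cousins among all size-$n$ programs is at least $|\Sigma|^{-a_?}$. Setting $c_T = |\Sigma|^{-\max(a_h, a_d)}$ and $n_T = \max(a_h, a_d)$ then yields both $\hh(n)/p(n) \geq c_T$ and $\hd(n)/p(n) \geq c_T$ for every $n \geq n_T$, and the two bounds come from disjoint families of cousins, so they hold simultaneously.

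The main obstacle will be the construction of the variant-E actual program: the self-reference must cooperate with the runtime read of the data segment so that every cousin genuinely reconstructs its own full code rather than just $P_T^?$'s code, and the self-delimitedness of the actual-program language has to be invoked to ensure that the parser identifies the data-segment boundary unambiguously for every possible $\beta$. Once these syntactic details are nailed down --- with the help of the blank character to interpose separators where the chosen concrete language requires them --- the counting argument above goes through essentially automatically, and Lemma~\ref{L:n-O(1)} is not needed for this line of attack.
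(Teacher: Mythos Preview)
Your approach is sound and takes a genuinely different route from the paper. The paper argues by contradiction through Lemma~\ref{L:n-O(1)}: assuming some $T$ has $\hh(n)/p(n)$ dropping below every positive threshold infinitely often, it encodes $m=\hh(n)$ in the data segment and uses the blank character to pad short inputs up to size $n$, thereby building an $n$-perfect tester of size $n-k+\Theta(\log k)$; for large $k$ this contradicts Lemma~\ref{L:n-O(1)}. Your route is a direct diagonalisation in the spirit of Theorems~\ref{T:itself} and~\ref{T:ill-d-R1}: one adversarial actual program, with the data segment supplying $|\Sigma|^{n-a}$ cousins at each size $n\geq a$, each forced to be hard by the usual diagonal argument, and the bound $p(n)\leq|\Sigma|^n$ (immediate from self-delimitedness of actual programs) completes the count.

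Your argument is more elementary --- it bypasses Lemma~\ref{L:n-O(1)} entirely --- and, as written, does not actually use the blank-character hypothesis: the ``syntactic convenience'' role you assign it is not a genuine use, so either find a real one or note that you prove a marginally stronger statement. The paper's proof, by contrast, needs the blank character essentially for the padding step. The one point deserving care in your write-up is the variant-E self-reference: you are invoking the recursion theorem \emph{within the actual-program language}, so the fixed point you obtain must itself be an actual program, not merely a full program with some particular data segment attached. This holds because the actual-program language is Turing-complete in its own right and admits the usual quine / s-m-n constructions, but that deserves an explicit sentence rather than being subsumed under ``syntactic details''. The paper itself takes such constructions for granted (cf.\ the proof of Theorem~\ref{T:ill-d-R1}), so your level of rigor matches.
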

\begin{proof}
Assume first that tester $T$ is a counter-example to the $\hh$-claim.
That is, for every $c > 0$, $T$ has infinitely many values of $n$ such that
$\hh(n)/p(n) < c$.

If $T$ uses its data segment, let the use be replaced by the use of ordinary
constants, liberating the data segment for the use described in the sequel.
Let $T_{k,m}$ be the following program.
Here $k$ is a constant inside $T_{k,m}$ represented by $\Theta(\log k)$
characters, and $m$ is the content of the data segment of $T_{k,m}$
interpreted as a natural number $m$ in base $|\Sigma|$.
Let $a$ and $d$ be the sizes of the actual program and data segment of
$T_{k,m}$.
We have $a = \Theta(\log k)$.
Let $x$ be the input of $T_{k,m}$.

The program $T_{k,m}$ first computes $n$ := $k + d$.
If $|x| < n$, then $T_{k,m}$ adds blank characters to $x$, to make its size
$n$.
Next, if $|x| > n$, then $T_{k,m}$ replies ``I don't know'' and halts.
Otherwise $T_{k,m}$ gives $x$ (which is now of size precisely $n$) to $T$.
If $T(x)$ replies ``yes'' or ``no'', then $T_{k,m}$ gives the reply as its
own reply and halts.
Otherwise $T_{k,m}$ constructs each character string $y$ of size $n$ and tests
it with $T$.
$T_{k,m}$ simulates in parallel those $y$ for which $T(y)$ returns ``I don't
know'' until $m$ of them have halted (with $y$ or the empty string as the
input, as appropriate).
Then it aborts those that have not halted.
If $x$ is among those that halted, then $T_{k,m}$ replies ``yes'', otherwise
$T_{k,m}$ replies ``no''.

For each $k \in \mathbb{N}$, there are infinitely many values of $n$ such that
$\hh(n)/p(n) < |\Sigma|^{-k}$.
For any such $n$ we have $\hh(n) < p(n)|\Sigma|^{-k} \leq
|\Sigma|^n|\Sigma|^{-k}$.
So $n-k$ characters suffice for representing $\hh(n)$.
Therefore, there is $T_{k,m}$ such that $d = n-k$ and $m = \hh(n)$.
It is an $n$-perfect three-way tester of size $a + d = d + \Theta(\log k) = n
- k + \Theta(\log k)$.
A big enough $k$ yields a contradiction with Lemma~\ref{L:n-O(1)}.

The proof of the $\hd$-claim is otherwise similar, but $T_{k,m}$ counts the
number $v$ of those $y$ for which $T(y)$ returns ``I don't know'', and
simulates the $y$ until $v-m$ of them have halted.
The $\hh$-claim and $\hd$-claim are combined into a single claim by choosing
the smaller $c_T$ and bigger $n_T$ provided by their proofs.
\end{proof}

\subsection{End-of-file Dead Segment Languages}\label{S:eof-dead}

In this section we show that if dead information can be added extensively
enough, a tester of variant E with an arbitrarily small positive failure rate
exists, but the opposite holds for variant S.
The reason for the result on variant E is that as the size of the programs
grows, a bigger and bigger proportion of programs consists of copies of
smaller programs.
This phenomenon is so strong that to obtain the desired failure rate, it
suffices to know the empty-input behaviour of all programs up to a sufficient
size.

An \emph{end-of-file dead segment language} is defined otherwise like
end-of-file data segment language (Definition~\ref{D:eof-data}), but the
actual program cannot read the data segment.
This is the situation with any self-delimiting real-life programming language,
whose compiler stops reading its input when it has read a complete program.
Any end-of-file dead segment language is frequent and computationally
domain-frequent.

\begin{theorem}\label{T:easy-sd}
For each end-of-file dead segment language,
$$\forall c > 0: \exists T_c \in \threeway{E}: \forall n \in \mathbb{N}:
\frac{\hhc{c}(n)+\hdc{c}(n)}{p(n)} \leq c\textrm{ .}$$
The result also holds with approximating and generic-case testers.
\end{theorem}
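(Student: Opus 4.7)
The plan is to exploit the fact that, in an end-of-file dead segment language, each program decomposes into a self-delimiting \emph{actual program} followed by an arbitrary dead tail that cannot influence the empty-input behaviour. I will construct $T_c$ that hard-codes the correct halting answers for all actual programs up to some threshold size $n_0$, and answers ``I don't know'' whenever the actual program of the input has size greater than $n_0$. Since only the actual program determines empty-input behaviour, the hard-coded answers remain correct no matter what dead segment is appended. The crux is picking $n_0$ so that the failure rate stays below $c$ for \emph{every} $n$, not only asymptotically.

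First I would do some counting. Let $q(a)$ be the number of actual programs of size exactly $a$. By the self-delimiting hypothesis these actual programs form a prefix-free subset of $\Sigma^*$, so Kraft's inequality gives $\sum_{a\ge 0} q(a)|\Sigma|^{-a} \le 1$, and consequently the tail $\tau(n_0):=\sum_{a>n_0} q(a)|\Sigma|^{-a}$ tends to $0$ as $n_0\to\infty$. Let $a_0$ be the smallest size for which $q(a_0)>0$ (which exists because the language is nontrivial). Every program of size $n$ has the form ``actual program of size $a$, followed by an arbitrary string of size $n-a$'', so $p(n)=\sum_{a=0}^{n} q(a)|\Sigma|^{n-a}$, and in particular $p(n)\ge |\Sigma|^{n-a_0}$ whenever $n\ge a_0$, while $p(n)=0$ for $n<a_0$.

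Next I would choose $n_0\ge a_0$ with $|\Sigma|^{a_0}\tau(n_0)\le c$, and define $T_c$ as follows. On input, $T_c$ reads characters one at a time, testing after each prefix whether it already is an actual program (decidable by hypothesis). This identifies the unique actual program of the input, of some size $a$. If $a\le n_0$, $T_c$ returns the correct empty-input halting answer looked up in a finite built-in table indexed by the actual program; if $a>n_0$, $T_c$ returns ``I don't know''. The hard instances at size $n$ are exactly those whose actual program has size $>n_0$, numbering $\sum_{a=n_0+1}^{n} q(a)|\Sigma|^{n-a}$, and dividing by the lower bound $p(n)\ge |\Sigma|^{n-a_0}$ yields a ratio bounded by $|\Sigma|^{a_0}\tau(n_0)\le c$ for $n\ge a_0$. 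For $n<a_0$ both numerator and denominator are $0$, and for $a_0\le n\le n_0$ every program already lies in the lookup table and the ratio is $0$, so the bound $\le c$ holds for all $n$.

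The extension to approximating and generic-case testers follows directly from Proposition~\ref{P:3way->}, whose conversions never increase $\hh(n)+\hd(n)$: the generic-case conversion preserves it, and each approximating conversion turns one of $\hh(n)$, $\hd(n)$ into $0$ while leaving the other unchanged. The main obstacle I anticipate is the quantitative bookkeeping rather than any conceptual difficulty: justifying Kraft's inequality in this setting, and, more subtly, ensuring the bound holds for \emph{all} $n$ rather than only for large $n$. The choice of $n_0$ above is tailored to the latter, since the small-$n$ regime is either empty ($n<a_0$) or fully covered by the lookup table ($a_0\le n\le n_0$).
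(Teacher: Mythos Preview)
Your proof is correct and follows essentially the same approach as the paper: both construct $T_c$ as a lookup table indexed by actual programs up to some threshold, and both establish the bound by showing that the Kraft-type sum $\sum_a q(a)|\Sigma|^{-a}$ converges so its tail can be made arbitrarily small. The paper encodes this via the monotone sequence $r(n)|\Sigma|^{-n}\nearrow\ell$ (which equals the partial Kraft sum) and bounds the \emph{covered} fraction from below using $p(n)\le\ell|\Sigma|^n$, whereas you bound the \emph{hard} fraction from above using $p(n)\ge|\Sigma|^{n-a_0}$; these are two sides of the same coin, and your invocation of Kraft's inequality makes the mechanism slightly more transparent.
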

\begin{proof}
Let $r(n)$ denote the number of programs whose dead segment is not empty.
We have $r(n) \leq p(n) \leq |\Sigma|^n$, so $r(n)|\Sigma|^{-n} \leq 1$.
For each $n \in \mathbb{N}$, $r(n+1) = |\Sigma|p(n) \geq |\Sigma|r(n)$.
So $r(n)|\Sigma|^{-n}$ grows as $n$ grows.
These imply that there is $\ell$ such that $r(n)|\Sigma|^{-n} \to \ell$ from
below when $n \to \infty$.

Because there are programs, $\ell > 0$.
For every $c > 0$ we have $\ell c > 0$, so there is $n_c \in \mathbb{N}$ such
that $r(n_c) |\Sigma|^{-n_c} \geq \ell - \ell c$.
On the other hand, $p(n) = r(n+1)/|\Sigma| \leq \ell|\Sigma|^n$.

These imply $p(n_c-1)|\Sigma|^{n-n_c+1} / p(n) = r(n_c)|\Sigma|^{n-n_c} / p(n)
\geq 1 - c$.
Here $p(n_c-1)|\Sigma|^{n-n_c+1}$ is the number of those programs of size $n$
whose actual program is of size less than $n_c$.

The behaviour of a program on the empty input only depends on its actual
program.
Let $n_a$ be the size of the actual program.
Consider a three-way tester that looks the answer from a look-up table if $n_a
< n_c$ and replies ``I don't know'' if $n_a \geq n_c$ (cf.\
Proposition~\ref{P:finite}).
It has $(\eh(n) + \ed(n)) / p(n) \geq 1-c$, implying the claim.

Proposition~\ref{P:3way->} generalizes the result to approximating and
generic-case testers.
\end{proof}

The above proof exploited the fact that the correct answer for a long program
is the same as the correct answer for a similarly behaving short program.
This does not work for testers of variant S, because the short and long
program no longer get the same input, since each one gets itself as its input.
Although the program does not have direct access to its dead segment, it gets
it via the input.
This changes the situation to the opposite of the previous theorem.

\begin{theorem}\label{T:hard-sd}
For each end-of-file dead segment language,
$$\exists c > 0: \forall T \in \threeway{S}: \forall n_0 \in \mathbb{N}:
\exists n \geq n_0: \frac{\hh(n)}{p(n)} \geq c \wedge \frac{\hd(n)}{p(n)} \geq
c\textrm{ ,}$$
$$\exists c > 0: \forall T \in \makebox[20mm][l]{\generic{S}}: \forall n_0 \in
\mathbb{N}:
\exists n \geq n_0: \frac{\hd(n)}{p(n)} \geq c\textrm{ , and}\hspace{13mm}$$
$$\exists c > 0: \forall T \in \makebox[20mm][l]{\apprx{S}}: \forall n_0 \in
\mathbb{N}: \exists n \geq n_0: \frac{\hh(n)+\hd(n)}{p(n)} \geq c\textrm{
.}\hspace{7mm}$$
\end{theorem}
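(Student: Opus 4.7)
My plan is to run a Turing-style diagonal against $T$ and then multiply hard instances via the dead segment. Given a three-way tester $T\in\threeway{S}$, I would construct two diagonalizing actual programs $A_h$ and $A_d$; both simulate $T$ on the input, loop on reply ``yes'' and halt on reply ``no'', with $A_h$ halting on ``I don't know'' while $A_d$ loops on it. For any program $\pi$ whose actual part is $A_h$, the dead segment is dead and $\pi$ computes the same partial function as $A_h$ alone, so $\pi(\pi)=A_h(\pi)$. The standard diagonal argument then forces $T(\pi)$ to be ``I don't know''---any ``yes'' or ``no'' would contradict the halting status of $\pi$ on $\pi$---and, because $A_h$ halts on ``I don't know'', $\pi$ halts on itself and is hard halting. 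Symmetrically, programs with actual part $A_d$ are hard non-halting.

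The counting step is where the dead segment structure pays off. A given actual program $A$ of size $a$ extends to $|\Sigma|^{n-a}$ programs of size $n\geq a$ by choosing any dead segment, and $p(n)\leq|\Sigma|^n$, so at least a proportion $|\Sigma|^{-a}$ of the size-$n$ programs share the given actual. Applying this to $A_h$ and $A_d$ gives $\hh(n)/p(n)\geq|\Sigma|^{-|A_h|}$ and $\hd(n)/p(n)\geq|\Sigma|^{-|A_d|}$ for every sufficiently large $n$, already proving the theorem with a constant depending on $T$ as in Theorem~\ref{T:itself}.

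The main obstacle is the outer quantifier: the theorem demands one $c$ that works for every $T$ at once, while the sizes of the $A_h$ and $A_d$ above grow with $|T|$. I would handle this in the style of Theorem~\ref{T:ill-eof} by letting $A_h$ and $A_d$ be two fixed universal diagonalizers that extract $T$'s actual source from a designated portion of their input, which is legal in variant S because the entire program (including its dead segment, used here as a parameter slot) is fed to the program under test. The hard instances then have the form $A_h\cdot T_{\mathrm{actual}}\cdot E$, and to convert the remaining $|T_{\mathrm{actual}}|$-dependence into a genuine universal constant at infinitely many sizes, I would argue by contradiction: if no universal $c$ worked, the resulting tester would, via a compressed enumeration of its hard instances at some size $n$ plugged into the dead-segment slot, yield an $n$-perfect tester of total size less than $n-e$ for arbitrarily large $e$, contradicting Lemma~\ref{L:n-O(1)}.

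The generic-case and approximating claims reuse the same diagonal skeleton. For $T\in\generic{S}$, the ``I don't know'' branch of $A_d$ is unnecessary and the diagonalizer simply inherits $T$'s failure to halt on the cousin, so each cousin is hard non-halting. For $T\in\apprx{S}$, the single diagonalizer that loops on ``yes'' and halts on ``no'' makes every reply of $T$ on a cousin wrong, so the entire cousin family is hard (as hard halting or hard non-halting) and contributes jointly to $(\hh+\hd)/p$.
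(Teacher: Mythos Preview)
Your first two paragraphs correctly reprove the $T$-dependent bound (this is essentially Theorem~\ref{T:itself}, since end-of-file dead segment languages are computably empty-frequent and domain-frequent). The real content of Theorem~\ref{T:hard-sd} is the outer ``$\exists c$ before $\forall T$'', and your third paragraph does not close that gap.

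Encoding $T_{\mathrm{actual}}$ inside the dead segment of a fixed $A_h$ does produce hard instances $A_h\cdot T_{\mathrm{actual}}\cdot E$, but their proportion at size $n$ is only about $|\Sigma|^{-|A_h|-|T_{\mathrm{actual}}|}$, which still depends on $|T|$. Your proposed rescue via Lemma~\ref{L:n-O(1)} does not go through as stated: in Theorem~\ref{T:ill-eof} that lemma is applied by having the would-be $n$-perfect tester read the crucial count $m=\hh(n)$ from its own \emph{data} segment. In a \emph{dead} segment language the tester's actual program cannot read its own dead segment, and its input is the instance $x$ under test, not the tester itself --- variant~S does not help here, because the tester is not the program being tested. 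Without an assumption that the \emph{actual} program part can store an arbitrary $m<|\Sigma|^{n-k}$ in $n-k$ characters, the size arithmetic that yields the contradiction with Lemma~\ref{L:n-O(1)} is unavailable.

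The paper sidesteps this by a different mechanism. A single fixed actual program $P$ selects the tester to diagonalize against not from the dead segment's \emph{content} but from the input's \emph{length}, via a computable surjection $\delta:\mathbb{N}\to\mathbb{N}$ hitting every index infinitely often (concretely $\delta(i)=i-s(i)+1$, $s(i)$ the largest square $\le i$). At any size $n$ with $P_{\delta(n)}=T$, \emph{every} program of size $n$ sharing $P$'s actual part is simultaneously diagonalizing against $T$, so the hard proportion is $\ge\frac{1}{3}|\Sigma|^{-n_a}$ with $n_a=|P_{\mathrm{actual}}|$ independent of $T$; a one-bit feature of the dead segment (the class of its last character) splits these into halting and non-halting halves. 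The approximating case then needs a further idea you do not mention: since all cousins share one actual program and hence one behaviour, yet $T$ may answer differently on different cousins, $P$ runs $T$ on all cousins, takes a \emph{majority vote}, and does the opposite, guaranteeing that at least half the replies are wrong.
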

\begin{proof}
We prove first the claims on three-way and generic-case testers.

Let us recall the overall idea of the proof of Theorem~\ref{T:itself}.
In that proof, for any tester $T$, a program $P_T$ was constructed that gives
its input $x$ to $T$.
If $T(x)$ replies ``yes'', then $P_T(x)$ enters an eternal loop.
If $T(x)$ replies ``no'', then $P_T(x)$ halts immediately.
To prove that a three-way tester has many hard (a) halting (b) non-halting
instances, in the case of the ``I don't know'' reply, $P_T(x)$ was made to (a)
halt immediately (b) enter an eternal loop.
All programs that halt on the same inputs as $P_T$ were shown to be hard
instances for $T$.
For each $n$ that is greater than a threshold that may depend on $T$, the
existence of at least $c_Tp(n)$ such programs was proven, where $c_T$ may
depend on $T$ but not on $n$.

We now apply the same idea, but, to get a result where the same constant $c$
applies to all testers $T$, we no longer construct a separate program $P_T$
for each $T$.
Instead, we construct a single program $P$, which obtains $T$ from the size of
the input of $P$.
(A similar idea appears in~\cite{KSZ05}.)
To discuss this, for any $i > 0$, let $P_i$ be the program whose shortlex
index is $i$.
Let $\delta(i) = i-s(i)+1$, where $s(i)$ is the biggest square number that is
at most $i$.
The essence of $\delta(i)$ is that as $i$ gets the values $1$, $2$, $3$,
\ldots, $\delta(i)$ gets each value $1$, $2$, $3$, \ldots\ infinitely many
times.

One more idea needs to be explained before discussing the details of $P$.
Let $\Sigma$ be partitioned to $\Sigma_1$ and $\Sigma_2$ of sizes $\lfloor
\frac{|\Sigma|}{2} \rfloor$ and $\lceil \frac{|\Sigma|}{2} \rceil$.
Let $n_a$ be the size of the actual program of $P$.
For each $n > n_a$, by modifying the dead segment, $|\Sigma|^{n-n_a}$ programs
are obtained that have the same actual program as $P$.
For $i \in \{1,2\}$, let $\Pi_i$ be the set of those of them whose dead
segment ends with a character in $\Sigma_i$.
We have $\frac{1}{3}|\Sigma|^{n-n_a} \leq |\Pi_1| \leq |\Pi_2|$.
Because $0 < p(n) \leq |\Sigma|^n$, by choosing $c =
\frac{1}{3}|\Sigma|^{-n_a}$ we get $\frac{1}{3}|\Sigma|^{n-n_a} / p(n) \geq
c$.

The program $P$ first checks that its input $x$ is a program with a non-empty
dead segment.
If it is not, then $P$ halts immediately.
Otherwise, $P$ constructs $P_{\delta(|x|)}$ by going through all character
strings in the shortlex order until $\delta(|x|)$ programs have been found.
Then $P$ constructs every program $y$ that has the same size, has the same
actual program, and belongs to the same $\Pi_i$ as $x$.
Then $P$ executes the $P_{\delta(|x|)}(y)$ in parallel until any of the
following happens.

If any $P_{\delta(|x|)}(y)$ replies ``yes'', then $P$ enters an eternal loop.
If any $P_{\delta(|x|)}(y)$ replies ``no'', then $P$ aborts the remaining
$P_{\delta(|x|)}(y)$ and halts.
If every $P_{\delta(|x|)}(y)$ replies ``I don't know'', then $P$ halts if $x
\in \Pi_1$, and enters an eternal loop if $x \in \Pi_2$.
If none of the above ever happens, then $P$ fails to halt.

Recall that $n_a$ is the size of the actual program of $P$.
For any tester $T$, there are infinitely many $n$ such that $n > n_a$ and
$P_{\delta(n)}$ is $T$.
For any such $n$, there are $|\Sigma|^{n-n_a}$ programs $P'$ of size $n$ that
have the same actual program as $P$.
Let $P''$ be any of them.
The execution of $P(P'')$ starts $P_{\delta(n)}(P')$ for at least
$\frac{1}{3}|\Sigma|^{n-n_a}$ distinct $P'$.
If $P_{\delta(n)}(P')$ replies ``yes'', then $T$ claims that $P'(P')$ halts.
Then also $P(P')$ halts, because $P$ halts on the same inputs as $P'$, since
they have the same actual program.
Furthermore, $P(P'')$ halts, because $P$ only looks at the size, actual
program, and $\Pi_i$-class of its input, and $P''$ and $P'$ agree on them.
But the halting of $P(P'')$ is in contradiction with the behaviour of $P$
described above.
Therefore, no $P_{\delta(n)}(P')$ can reply ``yes''.
For a similar reason, none of them replies ``no'' either.

In conclusion, all at least $\frac{1}{3}|\Sigma|^{n-n_a}$ distinct $P'$ are
hard instances for $T$.
If $T$ is a three-way tester, it replies ``I don't know'' for all of them.
Depending on whether $P'' \in \Pi_1$ or $P'' \in \Pi_2$, they are hard halting
or hard non-halting instances.
If $T$ is a generic-case tester, it halts on none of these hard instances.
Therefore, also $P(P'')$ and $P''(P'')$ fail to halt.
So they all are hard non-halting instances.

In the case of approximating testers, $P$ is modified such that it lets all
$P_{\delta(|x|)}(y)$ run into completion and counts the ``yes''- and
``no''-replies that they give.
If the majority of the replies are ``no'', then $P$ halts, otherwise $P$
enters an eternal loop.
For the same reasons as above, $P(P'')$ halts if and only if $P(P')$ halts if
and only if $P'(P')$ halts.
So at least half of the replies are wrong.
\end{proof}

Finally, we prove a corollary of the above theorem that deals with \emph{the
halting problem itself}, not with imperfect testers.
Imperfect testers are used in the proof of the corollary, but not in the
statement of the corollary.

\begin{lemma}\label{L:limit}
Let X be any of E, S, and G, and let $f$ be any total computable function from
natural numbers to integers.
If
$$\exists c > 0: \forall T \in \threeway{X}: \forall n_0 \in \mathbb{N}:
\exists n \geq n_0: \frac{\hh(n)}{p(n)} \geq c\textrm{ ,}$$
then $\displaystyle\lim_{n \to \infty} \frac{h(n)-f(n)}{p(n)}$ does not exist.
\end{lemma}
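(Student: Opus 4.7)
The plan is to argue by contradiction. Assume $L := \lim_{n\to\infty}(h(n)-f(n))/p(n)$ exists, and then use the computable function $f$ together with a rational approximation of $L$ from below to construct, for every rational $q$ with $L-c < q < L$, a three-way tester $T_q$ of variant X whose proportion of hard halting instances is at most roughly $L-q < c$. This contradicts the lemma's hypothesis applied to $T_q$ with the given constant $c$. Such a rational $q$ exists by density of the rationals in $\mathbb{R}$.

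For the construction of $T_q$: on input $x$ of size $n$, compute the target $t_n := \max\{0, \lceil f(n) + q\,p(n)\rceil\}$ (using the computability of $f$ and the rational constant $q$ hard-coded into the program), and dovetail simulations of all instances of size $n$ appropriate for variant X. As soon as $t_n$ of them have halted, check whether $x$ is among those halted instances; if so, reply ``yes'', and otherwise reply ``I don't know''. Because $q < L$ strictly, for all sufficiently large $n$ we have $h(n) > f(n) + q\,p(n)$, so $t_n \leq h(n)$ and the dovetailing stage terminates; Proposition~\ref{P:finite} handles the finitely many small sizes on which the stage might otherwise loop.

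For large $n$ the tester $T_q$ answers ``yes'' correctly on exactly $t_n$ instances, so $\hh(n) = h(n) - t_n$, and the assumed identity $h(n) - f(n) = L\,p(n) + o(p(n))$ yields $\hh(n)/p(n) \leq L - q + o(1)$. The lemma's hypothesis, applied to $T_q$, forces $\hh(n)/p(n) \geq c$ for infinitely many $n$, whence $c \leq L - q$, contradicting the choice $q > L - c$. The main point requiring care, rather than any deep obstacle, is ensuring that $T_q$ is genuinely total and genuinely constructible: only the rational $q$ and the computable function $f$ enter the code of $T_q$, so the (possibly non-computable) value $L$ never appears inside the program, while totality on small inputs is secured by Proposition~\ref{P:finite} and totality on large inputs by the strict inequality $q < L$.
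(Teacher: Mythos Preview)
Your proof is correct and follows essentially the same route as the paper's: both assume the limit $L$ exists, choose a rational $q$ strictly below $L$ (the paper uses a dyadic rational $x_i$ with $x_i < x \le x_i + 2^{-i-1}$ and $2^{-i}\le c$, you use any $q\in(L-c,L)$), and build a three-way tester that simulates all size-$n$ instances until roughly $f(n)+q\,p(n)$ have halted, yielding $\hh(n)/p(n)$ eventually below $c$ in contradiction to the hypothesis. The only cosmetic differences are your use of Proposition~\ref{P:finite} in place of the paper's explicit ``reply `I don't know' when $n<n_0$'' and your phrasing of the final bound as $L-q+o(1)$ rather than the paper's fixed $2^{-i}$.
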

\begin{proof}
Assume that $\lim_{n \to \infty} (h(n)-f(n))/p(n) = x$ and $c > 0$.
Let $i = \lceil -\log_2 c \rceil$.
There is an $x_i$ of the form $m + \sum_{j=1}^{i+1} b_j 2^{-j}$ such that $m$
is an integer, $b_j \in \{0,1\}$ when $1 \leq j \leq i+1$, and $x_i < x \leq
x_i + 2^{-i-1}$.
There also is $n_0$ such that when $n \geq n_0$, then $x_i \leq
(h(n)-f(n))/p(n) < x_i + 2^{-i}$.

A tester $T$ that disobeys the formula is obtained as follows.
If $n < n_0$, $T$ replies ``I don't know''.
If $n \geq n_0$, $T$ simulates all instances of size $n$ until $\lceil x_i
p(n) \rceil + f(n)$ have halted.
If the given instance is among those that halted, then $T$ replies ``yes'' and
otherwise ``I don't know''.
We have $\hh(n)/p(n) < 2^{-i} \leq c$.
\end{proof}

\begin{corollary}\label{C:nolimit1}
Consider variant S of the halting problem and any end-of-file dead segment
language.
Then $\lim_{n \to \infty} h(n)/p(n)$ does not exist.
\end{corollary}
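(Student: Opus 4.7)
The plan is to observe that Corollary~\ref{C:nolimit1} is an immediate consequence of chaining Theorem~\ref{T:hard-sd} into Lemma~\ref{L:limit}, by taking $f$ to be the zero function and X to be S.

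More specifically, first I would invoke the three-way part of Theorem~\ref{T:hard-sd}. It supplies a constant $c > 0$, depending only on the end-of-file dead segment language, such that every three-way tester $T$ of variant S satisfies $\hh(n)/p(n) \geq c$ for infinitely many $n$. This is exactly the hypothesis of Lemma~\ref{L:limit} with X set to S. Next I would apply Lemma~\ref{L:limit} with the total computable function $f(n) = 0$, which is obviously total and computable. The conclusion of the lemma then specializes to the non-existence of $\lim_{n \to \infty} h(n)/p(n)$, which is exactly the statement of the corollary.

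I do not expect any real obstacle; the entire content of the corollary lies in the previous two results, which have already been proven in the paper. The only minor point to flag, for the reader's convenience, is the match of quantifiers: the $c$ produced by Theorem~\ref{T:hard-sd} is universal over all three-way testers of variant S, which is precisely what Lemma~\ref{L:limit} requires on its input side, so the composition is legal. Thus the proof reduces to these two sentences of appeal.
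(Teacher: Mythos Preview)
Your proposal is correct and matches the paper's approach exactly: the corollary is stated immediately after Lemma~\ref{L:limit} with no explicit proof, precisely because it follows by feeding the $\hh$-part of the three-way claim in Theorem~\ref{T:hard-sd} into Lemma~\ref{L:limit} with $X=\mathrm{S}$ and $f\equiv 0$. There is nothing to add.
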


The proof of Lemma~\ref{L:limit} can be modified to approximating testers with
$(\hh(n) + \hd(n))/p(n) \geq c$.
By~(\ref{E:KSZ05}), the limit fails to exist also in the framework
of~\cite{KSZ05}.

\section{C++ without Comments and with Input}\label{S:C++}

\subsection{The Effect of Compile-Time Errors}\label{S:syntax}

We first show that among all character strings of size $n$, those that are not
C++ programs --- that is, those that yield a compile-time error --- dominate
overwhelmingly, as $n$ grows.
In other words, a random character string is not a C++ program except with
vanishing probability.
The result may seem obvious until one realizes that a C++ program may contain
comments and string literals which may contain almost anything.
We prove the result in a form that also applies to \BF.

C++ is not self-delimiting.
After a complete C++ program, there may be, for instance, definitions of new
functions that are not used by the program.
This is because a C++ program can be compiled in several units, and the
compiler does not check whether the extra functions are needed by another
compilation unit.
Even so, if $\pi$ is a C++ program, then $\pi\texttt{0}$ is definitely not a
C++ program and not even a prefix of a C++ program.
Similarly, if $\pi$ is a \BF\ program, then $\pi\texttt{]}$ is not a prefix of
a \BF\ program.

\begin{proposition}\label{P:syntax}
If for every $\pi \in \Pi$ there is $c \in \Sigma$ such that $\pi c \notin
\Gamma$, then
$$\lim_{n \to \infty} \frac{p(n)}{|\Sigma|^n} = 0\textrm{ .}$$
\end{proposition}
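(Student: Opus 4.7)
The plan is to show that the density of \emph{prefixes} of programs in $\Sigma^n$ is non-increasing and bounded, and then to extract from this a convergent series whose terms are precisely $p(n)/|\Sigma|^{n+1}$.

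First I would introduce $g(n) = |\Sigma^n \cap \Gamma|$, the number of length-$n$ strings that extend to some program. Since $\Pi \subseteq \Gamma$, we have $p(n) \le g(n)$, so it suffices to bound $g(n)/|\Sigma|^n$. Next I would set up a one-step recursion. Every string in $\Sigma^{n+1} \cap \Gamma$ is of the form $\pi c$ with $\pi \in \Sigma^n \cap \Gamma$ and $c \in \Sigma$, because any prefix of a prefix of a program is itself a prefix of a program. So the map that drops the last character sends $\Sigma^{n+1} \cap \Gamma$ into $\Sigma^n \cap \Gamma$ with at most $|\Sigma|$ preimages per point. The hypothesis strengthens this: if $\pi \in \Sigma^n \cap \Pi$, then some $c \in \Sigma$ satisfies $\pi c \notin \Gamma$, so $\pi$ has at most $|\Sigma|-1$ preimages. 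Summing,
\[
g(n+1) \;\le\; |\Sigma|\,(g(n) - p(n)) + (|\Sigma|-1)\,p(n) \;=\; |\Sigma|\,g(n) - p(n).
\]

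Dividing by $|\Sigma|^{n+1}$ and letting $q(n) = g(n)/|\Sigma|^n$, I would obtain $q(n+1) \le q(n) - p(n)/|\Sigma|^{n+1}$. Since $0 \le q(n) \le 1$, the sequence $q(n)$ is non-increasing and bounded below, hence convergent. Telescoping the inequality from $n=0$ to $N$ yields
\[
\sum_{n=0}^{N} \frac{p(n)}{|\Sigma|^{n+1}} \;\le\; q(0) - q(N+1) \;\le\; 1,
\]
so $\sum_{n=0}^{\infty} p(n)/|\Sigma|^{n+1}$ converges. In particular its general term tends to $0$, which gives $p(n)/|\Sigma|^n \to 0$ as required.

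The only delicate step is the recursion $g(n+1) \le |\Sigma|\,g(n) - p(n)$. The subtlety is to remember that $\Pi \subseteq \Gamma$, so the programs counted by $p(n)$ really are among the $g(n)$ prefixes being extended, and that the hypothesis applies to every $\pi \in \Pi$ (not just to some of them), which is exactly what lets us subtract one extension for each program. Once that bookkeeping is correct, everything else is a standard telescoping argument.
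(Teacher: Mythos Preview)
Your proof is correct. Both you and the paper start from the same counting inequality $g(n+1) \le |\Sigma|\,g(n) - p(n)$ (the paper writes $q$ for your $g$), derived in exactly the way you describe. From there, however, the arguments diverge. The paper splits into two cases according to whether $p(n)/g(n)$ eventually stays below every $\varepsilon>0$: if so, the conclusion is immediate; if not, there are infinitely many $n$ where $p(n) \ge \varepsilon\,g(n)$, and at each such step the ratio $g(n)/|\Sigma|^n$ drops by the fixed factor $1-\varepsilon/|\Sigma|$, forcing it (and hence $p(n)/|\Sigma|^n$) to~$0$. Your argument is more streamlined: you rewrite the inequality as a decrement of the bounded non-increasing sequence $g(n)/|\Sigma|^n$, telescope to obtain $\sum_n p(n)/|\Sigma|^{n+1} \le 1$, and conclude from convergence of this series. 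This avoids the case distinction entirely and is arguably the cleaner proof; the paper's version, on the other hand, makes explicit that in the ``interesting'' case the prefix density itself decays geometrically, which is slightly more information than you extract.
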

\begin{proof}
Let $q(n) = |\Sigma^n \cap \Gamma|$.
Obviously $0 \leq p(n) \leq q(n) \leq |\Sigma|^n$.

Assume first that for every $\varepsilon > 0$, there is $n_\varepsilon \in
\mathbb{N}$ such that $p(n) / q(n) < \varepsilon$ for every $n \geq
n_\varepsilon$.
Because $0 \leq p(n) / |\Sigma|^n \leq p(n) / q(n)$, we get $p(n) / |\Sigma|^n
\to 0$ as $n \to \infty$.

In the opposite case there is $\varepsilon > 0$ such that $p(n) / q(n) \geq
\varepsilon$ for infinitely many values of $n$.
Let they be $n_1 < n_2 < \ldots$.
Because $\pi c$ is not a prefix of any program, $q(n_i+1) \leq |\Sigma| q(n_i)
- p(n_i) \leq (|\Sigma| - \varepsilon) q(n_i)$.
For the remaining values of $n$, obviously $q(n+1)$ $\leq |\Sigma| q(n)$.
These imply that when $n > n_i$, we have $0 \leq p(n)/|\Sigma|^n \leq
q(n)/|\Sigma|^n \leq q(n_i) / |\Sigma|^{n_i} \leq (1 - \varepsilon /
|\Sigma|)^i \to 0$ when $i \to \infty$, which happens when $n \to \infty$.
\end{proof}
Consider a tester $T$ that replies ``no'' if the compilation fails and ``I
don't know'' otherwise.
If compile-time error is considered as non-halting, then
Proposition~\ref{P:syntax} implies that $\eh(n) \to 0$, $\hh(n) \to 0$,
$\ed(n) \to 1$, and $\hd(n) \to 0$ when $n \to \infty$.
As we pointed out in Section~\ref{S:KSZ05}, this is yet another instance of an
anomaly stealing the result.

\subsection{The C++ Language Model}\label{S:C++model}

The model of computation we study in this section is program--input pairs,
where the programs are written in C++, and the inputs obey the rules stated by
the Linux operating system.
Furthermore, $\Sigma$ is the set of all 8-bit bytes.
To make firm claims about details, it is necessary to fix some language and
operating system.
The validity of the details below has been checked with C++ and Linux.
Most likely many other programming languages and operating systems could have
been used instead.

There are two deviations from the real everyday programming situation.
First, of course, it must be assumed that unbounded memory is available.
Otherwise everything would be decidable.
(However, at any instant of time, only a finite number of bits are in use.)
Second, it is assumed that the programs do not contain comments.
This assumption needs a discussion.

Comments are information that is inside the program but ignored by the
compiler.
They have no effect to the behaviour of the compiled program.
We show next that most long C++ programs consist of a shorter C++ program and
one or more comments.

\begin{lemma}\label{L:clC++}
There are at most $(|\Sigma|-1)^n$ comment-less C++ programs of size $n$.
\end{lemma}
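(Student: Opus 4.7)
The plan is to exhibit a single byte value $b \in \Sigma$ that cannot occur at any position in any comment-less C++ program. Once such a $b$ is produced, a trivial position-by-position counting argument gives at most $|\Sigma|-1 = 255$ choices for each of the $n$ bytes, yielding the bound $(|\Sigma|-1)^n$ immediately. No more delicate accounting is needed, because the lemma only asks for this coarse bound.

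A natural candidate for $b$ is the NUL byte (value \texttt{0x00}). Outside of string and character literals, NUL lies outside the C++ source character set and is rejected by the compiler. Inside a string or character literal it is not permitted as a raw byte either; a NUL in a literal must be written through an escape such as \texttt{\char`\\0} or \texttt{\char`\\x00}, each of which occupies more than one byte, so a literal \texttt{0x00} byte still never appears in the source. Finally, because our programs are comment-less, NUL cannot be hidden inside a comment as a last resort. Hence a comment-less C++ program contains no NUL byte at any position, and the bound follows.

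The main obstacle is justifying that the chosen $b$ really is impossible at every position, with the trickiest context being inside string and character literals, where C++ is most permissive about which bytes it accepts. This needs a small check against the language definition of the chosen C++/Linux setting but is a finite verification, not a combinatorial difficulty. If NUL turned out to slip through in some corner case of the particular implementation, the same scheme succeeds verbatim for any other single byte that the compiler rejects everywhere; various control bytes not in the basic source character set, or certain high-bit bytes under a strict source encoding, fill this role. Thus the argument is robust: what matters is only the existence of one forbidden byte, and the $(|\Sigma|-1)^n$ bound is the direct consequence.
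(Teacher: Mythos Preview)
Your counting scheme is fine, but the key hypothesis---that there exists a single byte $b$ which is forbidden at \emph{every} position of a comment-less C++ program---fails in the paper's C++/Linux model. Indeed, the paper explicitly remarks immediately after this lemma that ``there indeed are places that are outside comments and where any byte except $\nl$ can occur.'' So at those positions your candidate \texttt{NUL} (and every other non-newline byte) is accepted; and since $\nl$ is clearly accepted elsewhere (e.g., between tokens), no single byte is universally forbidden. Your fallback clause (``the same scheme succeeds verbatim for any other single byte that the compiler rejects everywhere'') is therefore empty: no such byte exists here.

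The paper's argument avoids this by using a \emph{position-dependent} exclusion with two candidate bytes rather than one. It observes that for every string $\alpha$, at least one of $\alpha\texttt{@}$ and $\alpha\nl$ fails to be a prefix of any comment-less C++ program: in ordinary code \texttt{@} is outside the source character set and is rejected, while in the exceptional contexts that tolerate arbitrary bytes (such as inside string or character literals), a raw newline is rejected instead. This still limits each position to at most $|\Sigma|-1$ continuations, and an induction on the length of prefixes gives the bound $(|\Sigma|-1)^n$. The moral is that one forbidden byte per position suffices, but which byte is forbidden may vary with the position; insisting on a uniform choice is too strong and is exactly what the lemma's surprisingly tight constant forbids you from assuming.
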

\begin{proof}
Everywhere inside a C++ program excluding comments, it is either the case that
\texttt{@} or the case that the new line character $\nl$ cannot occur next.
That is, for every character string $\alpha$, either $\alpha\texttt{@}$ or
$\alpha\nl$ is not a prefix of any comment-less C++ program.
\end{proof}
(Perhaps surprisingly, there indeed are places that are outside comments and
where any byte except $\nl$ can occur.)

\begin{lemma}\label{L:allC++}
If $n \geq 16$, then there are at least $((|\Sigma|-1)^4 + 1)^{(n-19)/4}$ C++
programs of size $n$.
\end{lemma}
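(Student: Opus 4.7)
The plan is to exhibit, for each $n \geq 16$, a family of at least $((|\Sigma|-1)^4+1)^{(n-19)/4}$ pairwise distinct comment-less C++ programs of size $n$. I would construct each program as a concatenation $\alpha\,B_1B_2\cdots B_k\,\beta$, where $\alpha\beta$ is a short fixed skeleton and each $B_i$ is a $4$-byte block drawn independently from a pool $\mathcal{B}$ of size at least $(|\Sigma|-1)^4+1$. The number of blocks $k$ and the total skeleton length $|\alpha|+|\beta|$ would be chosen together, varying slightly with $n\bmod 4$, so that $|\alpha|+|\beta|+4k=n$ and $k\geq(n-19)/4$ in every residue class. The skeleton places the blocks inside a syntactically inert region, such as the content of a raw string literal used as a dead expression statement, so that any choice of blocks yields a valid comment-less C++ program and different tuples of blocks yield different programs.

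To form $\mathcal{B}$, I would take the $(|\Sigma|-1)^4$ four-byte sequences over $\Sigma\setminus\{c\}$ for a single forbidden byte $c$ whose absence keeps the inert region intact (for instance, the byte \texttt{)} in the case of a raw string literal of the shape \texttt{R"(...)"}, whose closing sequence is \texttt{)"}). These blocks can be concatenated in any order because $c$ never appears, so the forbidden closing sequence can never form. Then I would add one specially chosen block (for example \texttt{aa)a}) that contains $c$ at an interior position, flanked by bytes that keep the concatenation free of the forbidden delimiter both inside the block and across its boundaries with adjacent blocks; this contributes the ``$+1$'' in the formula. Distinctness of the resulting programs is immediate from the rigidity of the skeleton, so the count is at least $|\mathcal{B}|^k\geq((|\Sigma|-1)^4+1)^{(n-19)/4}$.

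The main obstacle is engineering the skeleton and the special block jointly so that every combination of block choices is guaranteed to produce a syntactically valid comment-less C++ program, with no loss of options at the block boundaries. A secondary subtlety is covering all four residues of $n\bmod 4$ with skeletons of the right lengths (so that the exponent $(n-19)/4$ is actually achieved, not merely its floor), and handling the base case $16\leq n\leq 19$, where the stated bound is at most $1$ and is met by exhibiting a single comment-less program of each such size, obtained for instance by padding \texttt{int main()\{\}} with suitable whitespace.
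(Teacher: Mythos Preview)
Your blueprint---a fixed skeleton, independent $4$-byte blocks drawn from a pool of size $(|\Sigma|-1)^4+1$, and short padding to cover every residue of $n\bmod 4$---is exactly the paper's. The difference is the choice of inert region: the paper places the blocks inside a \emph{comment}, using the skeleton \texttt{int main()\{/*$\alpha\beta$*/\}} with $\alpha$ consisting of $n\bmod 4$ spaces, forbidden byte \texttt{*}, and extra block \texttt{*//*} (close comment, immediately reopen). Two concrete points about your raw-string variant are worth flagging.

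First, the minimal raw-string skeleton \texttt{int main()\{R"($\cdots$)";\}} has length $18$, whereas the comment skeleton has length $16$. With minimal skeleton length $s$ one gets $m=\lfloor(n-s)/4\rfloor=\lceil(n-s-3)/4\rceil$ blocks; $s=16$ gives $m=\lceil(n-19)/4\rceil\ge(n-19)/4$ on the nose, but $s=18$ gives $m=\lceil(n-21)/4\rceil$, which is strictly below $(n-19)/4$ for every $n\equiv 0,1\pmod 4$ with $n\ge 20$. So your construction misses the stated exponent on half of all sizes, and this is not a base-case artefact. Second, your side claim that the witnesses are \emph{comment-less} cannot coexist with Lemma~\ref{L:clC++}: a lower bound of $((|\Sigma|-1)^4+1)^{(n-19)/4}$ on comment-less programs would eventually exceed the upper bound $(|\Sigma|-1)^n$. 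The resolution is that the proof of Lemma~\ref{L:clC++} (``outside comments, either \texttt{@} or newline is forbidden next'') tacitly rules out raw string literals, so in the paper's C++ model your inert region is unavailable. Using comments is not just convenient here; it is the whole point of the lemma, which is invoked precisely to show that admitting comments pushes the program count above the comment-less ceiling.
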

\begin{proof}
Let $A = \Sigma \setminus \{\texttt{*}\}$, and let $m = \lfloor n/4 - 4
\rfloor = {\lceil (n - 19) / 4 \rceil}$.
Consider the character strings of the form
\begin{center}
\texttt{int main()}\{\texttt{/*$\alpha\beta$*/}\}
\end{center}
where $\alpha$ consists of $(n \mod 4)$ space characters and $\beta$ is any
string of the form $\beta_1 \beta_2 \cdots \beta_m$, where $\beta_i \in A^4
\cup \{\texttt{*//*}\}$ for $1 \leq i \leq m$.
Each such string is a syntactically correct C++ program of size $n$.
Their number is $((|\Sigma|-1)^4 + 1)^m \geq ((|\Sigma|-1)^4 + 1)^{(n - 19) /
4}$.
\end{proof}

\begin{corollary}
The proportion of comment-less C++ programs among all C++ programs of size $n$
approaches $0$, when $n \to \infty$.
\end{corollary}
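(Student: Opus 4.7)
The plan is to simply take the ratio of the upper bound on comment-less C++ programs of size $n$ provided by Lemma~\ref{L:clC++} over the lower bound on all C++ programs of size $n$ provided by Lemma~\ref{L:allC++}, and show that this ratio tends to $0$. No new combinatorics is needed; the whole content is an asymptotic comparison of two exponentials.

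Concretely, for $n \geq 16$ the ratio of comment-less C++ programs to all C++ programs of size $n$ is bounded above by
$$
\frac{(|\Sigma|-1)^n}{((|\Sigma|-1)^4+1)^{(n-19)/4}}.
$$
I would rewrite the numerator as $((|\Sigma|-1)^4)^{n/4}$ and factor the denominator as $((|\Sigma|-1)^4+1)^{n/4} \cdot ((|\Sigma|-1)^4+1)^{-19/4}$, so that the bound takes the form
$$
\left(\frac{(|\Sigma|-1)^4}{(|\Sigma|-1)^4+1}\right)^{n/4} \cdot ((|\Sigma|-1)^4+1)^{19/4}.
$$
The second factor is a constant depending only on $|\Sigma|$, while the first factor is of the form $r^{n/4}$ with $0 < r < 1$ (since $(|\Sigma|-1)^4 < (|\Sigma|-1)^4+1$). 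Hence the whole expression tends to $0$ as $n \to \infty$, which gives the corollary.

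There is no real obstacle here; the step that deserves a moment of care is keeping the exponent $(n-19)/4$ honest through the algebraic manipulation, but once the $19/4$ is peeled off as a harmless multiplicative constant, the geometric decay of $\bigl((|\Sigma|-1)^4/((|\Sigma|-1)^4+1)\bigr)^{n/4}$ does all the work. The conceptual point underlying the two lemmas and this corollary is that each character of a comment-less program is constrained (by Lemma~\ref{L:clC++}, at least one byte value is forbidden at every position), whereas the interior of a comment can contain \emph{every} byte value, giving one extra option per character and thus a strictly larger exponential base.
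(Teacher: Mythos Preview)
Your proof is correct and follows essentially the same approach as the paper: bound the proportion by the ratio of the Lemma~\ref{L:clC++} upper bound to the Lemma~\ref{L:allC++} lower bound and observe that, after peeling off a constant, what remains is a geometric term with base strictly less than $1$. The only cosmetic difference is that the paper factors out $s^{19}$ (with $s=|\Sigma|-1$) and keeps the exponent $(n-19)/4$, whereas you factor out $(s^4+1)^{19/4}$ and use exponent $n/4$; both rearrangements are equivalent and lead to the same conclusion.
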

\begin{proof}
Let $s = |\Sigma|-1$.
By Lemmas~\ref{L:clC++} and~\ref{L:allC++}, the proportion is at most\\
$s^n / (s^4 + 1)^{(n - 19) / 4} = s^{19} (s^4 / (s^4 + 1))^{(n - 19) / 4} \to
0$, when $n \to \infty$.
\end{proof}
As a consequence, although comments are irrelevant for the behaviour of
programs, they have a significant effect on the distribution of long C++
programs.
To avoid the risk that they cause yet another anomaly stealing the result, we
restrict ourselves to C++ programs without comments.
This assumption does not restrict the expressive power of the programming
language, but reduces the number of superficially different instances of the
same program.

The input may be any finite string of bytes.
This is how it is in Linux.
Although not all such inputs can be given directly via the keyboard, they can
be given by directing the so-called standard input to come from a file.
There is a separate test construct in C++ for detecting the end of the input,
so the end of the input need not be distinguished by the contents of the
input.
There are $256^n$ different inputs of size $n$.

The sizes of a program and input are the number of bytes in the program and
the number of bytes in the input file.
This is what Linux reports.
The size of an instance is their sum.
Analogously to Section~\ref{S:frequent}, the size of a program is additional
information to the concatenation of the program and the input.
This is ignored by our notion of size.
However, the notion is precisely what programmers mean with the word.
Furthermore, the convention is similar to the convention in ordinary (as
opposed to self-delimiting) Kolmogorov complexity theory~\cite{LiV08}.

\begin{lemma}\label{L:pC++}
With the C++ programming model in Section~\ref{S:C++model}, $p(n) <
|\Sigma|^{n+1}$.
\end{lemma}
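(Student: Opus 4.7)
The plan is to count program--input pairs of joint size $n$ by summing over the possible split between the size of the program and the size of the input. Under the C++ model of Section~\ref{S:C++model}, an instance of size $n$ consists of a comment-less C++ program of some size $k\in\{0,1,\ldots,n\}$ together with an arbitrary byte sequence of size $n-k$ on the standard input, and any byte sequence is a legal Linux input. So the number of instances splits as a sum.

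First I would invoke Lemma~\ref{L:clC++} to bound the number of comment-less C++ programs of size $k$ by $(|\Sigma|-1)^k$, and note that the number of inputs of size $n-k$ is exactly $|\Sigma|^{n-k}$. This yields
$$p(n) \;\leq\; \sum_{k=0}^{n} (|\Sigma|-1)^k\,|\Sigma|^{n-k} \;=\; |\Sigma|^n \sum_{k=0}^{n} \left(\frac{|\Sigma|-1}{|\Sigma|}\right)^{\!k}.$$
Next I would evaluate the geometric sum with ratio $r=(|\Sigma|-1)/|\Sigma|<1$:
$$\sum_{k=0}^{n} r^k \;=\; \frac{1-r^{n+1}}{1-r} \;=\; |\Sigma|\,\bigl(1-r^{n+1}\bigr) \;<\; |\Sigma|,$$
the strict inequality holding because $r^{n+1}>0$. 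Substituting back gives $p(n) < |\Sigma|\cdot|\Sigma|^n = |\Sigma|^{n+1}$, as required.

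There is no serious obstacle; the only point that requires care is the definition of ``joint size''. The argument relies on the fact that an instance of size $n$ is fully specified by a program of some size $k$ together with an input of the complementary size $n-k$, with no extra separator bytes, so that the factor counting inputs is exactly $|\Sigma|^{n-k}$. This matches the convention stated just before the lemma, namely that the size of an instance is the sum of the bytes of the program file and the bytes of the input file as reported by Linux, so the counting decomposition is valid.
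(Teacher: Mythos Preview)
Your proof is correct and follows essentially the same approach as the paper: both split the count by program size, apply Lemma~\ref{L:clC++} to bound programs by $(|\Sigma|-1)^k$, and then bound the resulting geometric series by $|\Sigma|$. The only cosmetic difference is that the paper bounds the finite sum by the infinite geometric series $\sum_{i=0}^\infty r^i = |\Sigma|$, whereas you compute the finite sum exactly and note it is strictly less than $|\Sigma|$.
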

\begin{proof}
By Lemma~\ref{L:clC++}, the number of different program--input pairs of size
$n$ is at most
$$\sum_{i=0}^n (|\Sigma|-1)^i |\Sigma|^{n-i} \ =\ |\Sigma|^n \sum_{i=0}^n
\big(\frac{|\Sigma|-1}{|\Sigma|}\big)^i \ <\ |\Sigma|^n \sum_{i=0}^\infty
\big(\frac{|\Sigma|-1}{|\Sigma|}\big)^i \ =\ |\Sigma|^{n+1}\textrm{
.}$$\\[-4.1ex]
\end{proof}

\subsection{Proportions of Hard Instances}\label{S:ill}

The next theorem says that with halting testers of variant G and comment-less
C++, the proportions of hard halting and hard non-halting instances do not
vanish.

\begin{theorem}\label{T:ill-h-C}
With the C++ programming model in Section~\ref{S:C++model},
$$\forall T \in \threeway{G}: \exists c_T > 0: \exists n_T \in \mathbb{N}:
\forall n \geq n_T: \frac{\hh(n)}{p(n)} \geq c_T \wedge \frac{\hd(n)}{p(n)}
\geq c_T\textrm{ .}$$
\end{theorem}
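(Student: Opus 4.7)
The plan is to mimic the proof of Theorem~\ref{T:itself}, but adapted to variant G with C++ as the programming model, and exploiting the fact that any byte string is a legal input so that an enormous family of hard program--input pairs can be obtained by varying just the input while keeping the program fixed.

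Concretely, given any three-way tester $T \in \threeway{G}$, I would construct a comment-less C++ program $P_T$ that hardwires its own source (the usual quine/second-recursion construction, straightforward in C++), reads its input $x$, and then executes $T$ on the pair $(P_T,x)$. On reply ``yes'' the program enters an eternal loop; on reply ``no'' it halts; on reply ``I don't know'' it halts (first construction) or enters an eternal loop (second construction). Because $T$ is a three-way tester, the subroutine call always terminates, so $P_T$ is well defined. Let $a = |P_T|$, a constant depending only on $T$.

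The diagonal argument then shows that neither ``yes'' nor ``no'' can ever be the reply of $T(P_T,x)$: a ``yes'' reply would force $P_T(x)$ to loop, contradicting correctness, and symmetrically for ``no''. Hence $T(P_T,x) = $ ``I don't know'' for every input $x$, so every pair $(P_T,x)$ is a hard instance of variant~G. In the first construction $P_T(x)$ halts for every $x$, so every such pair is a hard halting instance; in the second construction $P_T(x)$ diverges for every $x$, so every such pair is a hard non-halting instance. For each $n \geq a$, the number of inputs $x$ of size $n-a$ is $|\Sigma|^{n-a}$, and each such $x$ yields a distinct pair $(P_T,x)$ of joint size $n$. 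Combining this with Lemma~\ref{L:pC++}, which gives $p(n) < |\Sigma|^{n+1}$, each construction yields a bound $\geq |\Sigma|^{n-a}/|\Sigma|^{n+1} = |\Sigma|^{-a-1}$. Taking $c_T$ to be the smaller of the two constants obtained from the two constructions (they may have slightly different $a$) and $n_T$ the larger of the two corresponding thresholds gives both inequalities simultaneously.

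The main technical point is simply to verify that the self-referential program $P_T$ can indeed be written as a comment-less C++ program whose size is bounded by a constant depending only on $T$; this is routine since the length of the quine boilerplate is independent of $x$, and $T$ is by assumption representable inside $P_T$ as a fixed subroutine. Everything else (the diagonal contradiction and the counting against Lemma~\ref{L:pC++}) is immediate, so I do not anticipate a real obstacle beyond the usual care with the quine trick and with keeping the program comment-free.
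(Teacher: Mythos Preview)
Your proof is correct, but it follows a genuinely different route from the paper's. You adapt the classical diagonal argument of Theorem~\ref{T:itself}: build a self-referential $P_T$ via a quine, have it invoke $T(P_T,x)$ and do the opposite, conclude that every pair $(P_T,x)$ is hard, and count inputs against Lemma~\ref{L:pC++}. This is clean and conceptually the most direct attack.

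The paper instead avoids the quine entirely. Its $P_T$ hard-codes only its own \emph{size} $n_p$ (a small integer constant, far easier to arrange than full self-reference), reads the input as a natural number $x$, computes $\eh(n)$ by testing all pairs of size $n=n_p+n_i$ with $T$, and then simulates all pairs of size $n$ in parallel until $\eh(n)+x$ of them have halted. A timing argument---a simulated run of $(P_T,x)$ cannot complete strictly inside the genuine run of $(P_T,x)$---forces $x \neq \hh(n)$ for every $x<|\Sigma|^{n_i}$, hence $\hh(n)\geq|\Sigma|^{n_i}$. The $\hd$ part is handled by the symmetric threshold $p(n)-\ed(n)-x$. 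So rather than exhibiting specific hard instances, the paper rules out every small value of $\hh(n)$ by interpreting the input as a guess of that number.

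What each buys: your argument is shorter and reuses machinery already present in the paper (Theorem~\ref{T:itself}), at the cost of the quine boilerplate. The paper's argument trades the quine for a self-simulation timing trick and only needs $P_T$ to know one integer about itself; this technique is of independent interest and reappears, in spirit, in Lemma~\ref{L:n-O(1)}.
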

\begin{proof}
We prove first the $\hh(n) / p(n) \geq c_T$ part and then the $\hd(n) / p(n)
\geq c_T$ part.
The results are combined by picking the bigger $n_T$ and the smaller $c_T$.

There is a program $P_T$ that behaves as follows.
First, it gets its own size $n_p$ from a constant in its program code.
The constant uses some characters and thus affects the size of $P_T$.
However, the size of a natural number constant $m$ is $\Theta(\log m)$ and
grows in steps of zero or one as $m$ grows.
Therefore, by starting with $m = 1$ and incrementing it by steps of one, it
eventually catches the size of the program, although also the latter may grow.

Then $P_T$ reads the input, counting the number of the characters that it gets
with $n_i$ and interpreting the string of characters as a natural number $x$
in base $|\Sigma|$.
We have $0 \leq x < |\Sigma|^{n_i}$, and any natural number in this range is
possible.
Let $n = n_p + n_i$.

Next $P_T$ constructs every program--input pair of size $n$ and tests it with
$T$.
In this way $P_T$ gets the number $\eh(n)$ of easy halting pairs of size $n$.

Then $P_T$ constructs again every pair of size $n$.
This time it simulates each of them in parallel until $\eh(n) + x$ of them
have halted.
Then it aborts the rest and halts.
It halts if and only if $\eh(n) + x \leq h(n)$.
(It may be helpful to think of $x$ as a guess of the number of hard halting
pairs.)

Among the pairs of size $n$ is $P_T$ itself with the string that represents
$x$ as the input.
We denote it with $(P_T,x)$.
The time consumption of any simulated execution is at least the same as the
time consumption of the corresponding genuine execution.
So the execution of $(P_T,x)$ cannot contain properly a simulated execution of
$(P_T,x)$.
Therefore, either $(P_T,x)$ does not halt, or the simulated execution of
$(P_T,x)$ is still continuing when $(P_T,x)$ halts.
In the former case, $h(n) < \eh(n) + x$.
In the latter case $(P_T,x)$ is a halting pair but not counted in $\eh(n) +
x$, so $h(n) > \eh(n) + x$.
In both cases, $x \neq h(n) - \eh(n)$.

As a consequence, no natural number less than $|\Sigma|^{n_i}$ is $\hh(n)$.
So $\hh(n) \geq |\Sigma|^{n_i} = |\Sigma|^{n-n_p}$.
By Lemma~\ref{L:pC++}, $p(n) < |\Sigma|^{n+1}$.
So for any $n \geq n_p$, we have $\hh(n)/p(n) > |\Sigma|^{-n_p-1}$.

The proof of the $\hd(n) / p(n) \geq c_T$ part is otherwise similar, except
that $P_T$ continues simulation until $p(n) - \ed(n) - x$ pairs have halted.
(Now $x$ is a guess of $\hd(n)$, yielding a guess of $h(n)$ by subtraction.)
The program $P_T$ gets $p(n)$ by counting the pairs of size $n$ whose program
part is compilable.
It turns out that $p(n) - \ed(n) - x \neq h(n)$, so $x$ cannot be $\hd(n)$,
yielding $\hd(n) \geq |\Sigma|^{n_i}$.
\end{proof}
Next we adapt the second main result in~\cite{KSZ05} to our present setting,
with a somewhat simplified proof and obtaining the result also for three-way
and generic-case testers.

\begin{theorem}\label{T:indep}
With the C++ programming model in Section~\ref{S:C++model},
$$
\exists c > 0: \forall T \in \threeway{G}: \forall n_0 \in \mathbb{N}:
\exists n \geq n_0: \frac{\hh(n)}{p(n)} \geq c \wedge \frac{\hd(n)}{p(n)}
\geq c\textrm{ ,}
$$
$$
\exists c > 0: \forall T \in \makebox[21mm][l]{\generic{G}}: \forall n_0 \in
\mathbb{N}:
\exists n \geq n_0: \frac{\hd(n)}{p(n)} \geq c\textrm{ , and}\hspace{13mm}
$$
$$
\exists c > 0: \forall T \in \makebox[21mm][l]{\apprx{G}}: \forall n_0 \in
\mathbb{N}:
\exists n \geq n_0: \frac{\hh(n)+\hd(n)}{p(n)} \geq c\textrm{ .}\hspace{7mm}
$$
\end{theorem}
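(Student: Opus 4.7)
The plan is to adapt the diagonal argument of Theorem~\ref{T:itself} to variant~G and to combine it with the ``$\delta$-trick'' of Theorem~\ref{T:hard-sd}, so that a single fixed comment-less C++ program $P$ of some size $n_p$ diagonalises simultaneously against every tester $T$ of the class under consideration. On input $x$, the program $P$ computes $i := \delta(|x|)$ (with $\delta$ as defined in the proof of Theorem~\ref{T:hard-sd}, so that each positive integer $j$ occurs as $\delta(|x|)$ for infinitely many $|x|$), synthesises the $i$th compilable comment-less C++ program $P_i$, reconstructs its own source $\pi_P$ by the usual self-reference idiom, and runs the candidate tester $T := P_i$ on the pair $(\pi_P, x)$ as required by variant~G.

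The response of $P$ to $T$'s answer is tailored to each of the three claims. For the approximating claim, $P$ loops on ``yes'' and halts on ``no'', exactly as the program $P_T$ in the proof of Theorem~\ref{T:itself}. For the generic-case claim, $P$ does the same but simply waits when $T$ fails to halt, so that $P$ inherits non-halting from $T$. For the three-way claim, both $\hh$ and $\hd$ bounds must hold at the same $n$, so $P$ splits $\Sigma$ into two non-empty parts $\Sigma_1, \Sigma_2$ and branches on the first byte of $x$: in the $\Sigma_1$-branch, $P$ halts on ``no'' and on ``I don't know'' and loops on ``yes''; in the $\Sigma_2$-branch, $P$ loops on ``yes'' and on ``I don't know'' and halts on ``no''.

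The usual diagonalisation, analogous to the one in Theorem~\ref{T:itself}, then shows that whenever $P_{\delta(|x|)} = T$, the pair $(\pi_P, x)$ is hard for $T$: neither ``yes'' nor ``no'' is consistent with the correctness requirement of the relevant tester class, so $T$ must fall back to its residual behaviour (``I don't know'' for three-way testers, non-halting for generic-case, and a definitely wrong ``yes''/``no'' for approximating). In the three-way case the $\Sigma_1$-branch forces the resulting hard pair to be halting and the $\Sigma_2$-branch forces it to be non-halting, yielding at least $\tfrac{1}{3}|\Sigma|^{|x|}$ hard pairs of each kind for every good input length.

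It remains to count. For every tester $T$ there are infinitely many $|x|$ with $\delta(|x|)$ equal to the shortlex index of $T$; for each such $|x|$, either all $|\Sigma|^{|x|}$ possible inputs (in the approximating and generic-case constructions) or a fraction of at least one third of them (in the three-way construction) yield distinct hard pairs of size $n = n_p + |x|$. By Lemma~\ref{L:pC++}, $p(n) < |\Sigma|^{n+1}$, so the fraction of hard pairs of the required kind at those $n$ is at least $\tfrac{1}{3}|\Sigma|^{-n_p-1}$, a positive constant depending only on $n_p$ and $|\Sigma|$. The main technical obstacle is to realise $P$ as a genuine finite comment-less C++ program that carries out the enumeration of C++ programs, the self-reference, and the branching diagonal, and to confirm that $n_p$ is in fact finite and independent of $T$; none of these ingredients is novel, but they must fit together so that the universal constant $c := \tfrac{1}{3}|\Sigma|^{-n_p-1}$, taken as the minimum across the three constructions, serves all three claims at once.
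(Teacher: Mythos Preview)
Your argument is correct and follows the paper's overall strategy---use the $\delta$-function of Theorem~\ref{T:hard-sd} so that a single fixed program $P$ cycles through and diagonalises against every tester---but the execution differs in two notable respects. First, you equip $P$ with explicit self-reference (a quine producing $\pi_P$) and have it feed the encoded pair $(\pi_P,x)$ directly to $P_{\delta(|x|)}$, so that the relevant index is the shortlex index of $T$ itself; the paper instead avoids self-reference by having $P$ call $P_{\delta(|\alpha|)}$ on bare strings, and then, for each $T$, observing that some wrapper program with $P$ hard-coded computes $\beta \mapsto T(P,\beta)$, whose shortlex index is the one sought in the $\delta$-sequence. Your route is shorter but presumes a fixed pair-encoding that $P$ knows; the paper's wrapper absorbs any such convention. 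Second, your $P(x)$ queries the candidate tester only once, on the single pair $(\pi_P,x)$, and diagonalises directly. The paper's $P(\alpha)$ instead iterates over the whole equivalence class of inputs $\beta$ with $|\beta|=|\alpha|$ and the same low bit of the last byte, querying each in turn (and, for approximating testers, taking a majority vote); this makes $P$'s behaviour depend only on the class, so that a ``yes'' or ``no'' answer for \emph{any} $\beta$ in the class yields a contradiction for that $\beta$. Your single-query diagonal is sound because each $(\pi_P,x)$ already refutes the corresponding answer on its own; the class-wide iteration is carried over from the proof of Theorem~\ref{T:hard-sd} and gives a uniform treatment, but is not essential for the present argument.
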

\begin{proof}
\newcommand{\last}{\textsf{lb}}
The proof follows the same strategy as the proof of Theorem~\ref{T:hard-sd},
but differs in some technical details.

To prove the claim for three-way testers, for any character string $\alpha$,
let $\last(\alpha) = 0$ if $\alpha$ is the empty string, and otherwise
$\last(\alpha)$ is the value of the least significant bit of the last
character of $\alpha$.
For any character strings $\alpha$ and $\beta$, let $\alpha \simeq \beta$ if
and only if $|\alpha| = |\beta|$ and $\last(\alpha) = \last(\beta)$.
For any size $n$ greater than $0$, ``$\simeq$'' has two equivalence classes,
each containing $|\Sigma|^n/2$ character strings.
For any $i > 0$, let $P_i$ be the program whose shortlex index is $i$.

There is a program $P$ that behaves as follows.
We denote its execution on input $\alpha$ with $P(\alpha)$.
Please observe that if $\alpha \simeq \beta$, then $P(\beta)$ behaves in the
same way as $P(\alpha)$.

First $P(\alpha)$ finds the program $P_{\delta(|\alpha|)}$, where $\delta(i)
= i-s(i)+1$, where $s(i)$ is the biggest square number that is at most $i$.

Then $P(\alpha)$ goes through, in the shortlex order, all $\lceil
|\Sigma|^{|\alpha|}/2 \rceil$ character strings $\beta$ such that $\alpha
\simeq \beta$, until any of the termination conditions mentioned below occurs
or $P(\alpha)$ has gone through all of them.
For each $\beta$, it runs $P_{\delta(|\alpha|)}$ on $\beta$.
We denote this with $P_{\delta(|\alpha|)}(\beta)$.
If $P_{\delta(|\alpha|)}(\beta)$ fails to halt, then $P(\alpha)$ never returns
from it and thus fails to halt.
If $P_{\delta(|\alpha|)}(\beta)$ halts replying ``yes'', then $P(\alpha)$
enters an eternal loop, thus failing to halt.
If $P_{\delta(|\alpha|)}(\beta)$ halts replying ``no'', then $P(\alpha)$ halts
immediately.
If $P_{\delta(|\alpha|)}(\beta)$ halts replying ``I don't know'', then
$P(\alpha)$ tries the next $\beta$.
It is not important what $P(\alpha)$ does if $P_{\delta(|\alpha|)}(\beta)$
halts replying something else.

If $P_{\delta(|\alpha|)}(\beta)$ halted replying ``I don't know'' for every
$\beta$ such that $\alpha \simeq \beta$, then $P(\alpha)$ checks whether
$\last(\alpha) = 0$.
If yes, then $P(\alpha)$ enters an eternal loop, otherwise $P(\alpha)$ halts.

Now let $T(Q,\gamma)$ be any three-way tester that tests whether program $Q$
halts on the input $\gamma$.
How the two components $Q$ and $\gamma$ of the input of $T$ are encoded into
one input string is not important.
There is a program that has $P$ hard-coded into a string constant, inputs
$\beta$, calls $T(P,\beta)$, and gives its reply as its own reply.
Let $i$ be the shortlex index of this program, so the program is $P_i$.

There are infinitely many positive integers $j$ such that $\delta(j) = i$.
Let $j$ be such, and let $\alpha$ be any character string of size $j$.
So $P_{\delta(|\alpha|)}$ is $P_i$.
If, during the execution of $P(\alpha)$, $P_i(\beta)$ ever replies ``yes'' or
``no'', then the same happens during the execution of $P(\beta)$, because
$P(\beta)$ behaves in the same way as $P(\alpha)$ (the fact that $P_i(\beta)$
was called implies $\alpha \simeq \beta$).
But that would be incorrect by the construction of $P$.
Therefore, $T(P,\beta)$ replies ``I don't know'' for every $\beta$ of size
$j$.

As a consequence, $T$ has at least $|\Sigma|^j$ hard instances of size
$|P|+j$.
If $j > 0$, then half of them are halting and the other half non-halting,
thanks to the $\last(\alpha) = 0$ test near the end of $P$.
By Lemma~\ref{L:pC++}, $p(n) < |\Sigma|^{n+1}$.
So if $n = |P|+j > |P|$, then
$$\frac{\hh(n)}{p(n)} \geq \frac{|\Sigma|^{n-|P|}}{2 |\Sigma|^{n+1}} =
\frac{1}{2 |\Sigma|^{|P|+1}} \textrm{ ~ and ~ } \frac{\hd(n)}{p(n)} \geq
\frac{1}{2 |\Sigma|^{|P|+1}} \textrm{ .}$$
The program $P$ does not depend on $n$, so letting $c = 1/(2
|\Sigma|^{|P|+1})$ we have the claim.

The proof for generic-case testers is otherwise similar, but the $\beta$ are
tried in parallel and $T(P,\beta)$ fails to halt for every $\beta$ of size
$j$.
All hard instances are non-halting.
The $P$ for approximating testers lets each $P_{\delta(|\alpha|)}(\beta)$
continue until completion, counts the numbers of the ``yes''- and
``no''-replies they yield, and then does the opposite of the majority of the
replies.
\end{proof}

Application of Lemma~\ref{L:limit} to this result yields the following.

\begin{corollary}\label{C:nolimit2}
With the C++ programming model in Section~\ref{S:C++model}, $\lim_{n \to
\infty} h(n)/p(n)$ does not exist.
\end{corollary}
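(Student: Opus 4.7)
The plan is to apply Lemma~\ref{L:limit} directly, taking $X$ to be $G$ and $f$ to be the constantly zero function, which is clearly total and computable. With this choice, the conclusion of the lemma becomes exactly the statement that $\lim_{n \to \infty} h(n)/p(n)$ fails to exist, so it suffices to verify the hypothesis of Lemma~\ref{L:limit} in this instantiation.

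The hypothesis requires a positive constant $c$ such that every three-way tester $T \in \threeway{G}$ satisfies $\hh(n)/p(n) \geq c$ for infinitely many $n$. But this is immediate from the first clause of Theorem~\ref{T:indep}, which under the C++ programming model of Section~\ref{S:C++model} gives an even stronger statement (a simultaneous lower bound on both $\hh(n)/p(n)$ and $\hd(n)/p(n)$ for a uniform $c$). Discarding the $\hd$ half yields exactly the antecedent of Lemma~\ref{L:limit}.

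Since both the hypothesis of Lemma~\ref{L:limit} and the choice $f \equiv 0$ are now in place, the corollary drops out in one line, and the conclusion $\lim_{n \to \infty} h(n)/p(n) = \lim_{n \to \infty} (h(n) - f(n))/p(n)$ does not exist. There is really no obstacle here; the substantive content has been carried out in Theorem~\ref{T:indep} (whose proof uses a diagonalization via a self-referential program $P$ together with the density bound from Lemma~\ref{L:pC++}) and in Lemma~\ref{L:limit} (whose proof constructs, assuming the limit exists, a three-way tester that beats any prescribed $c$ by simulating instances up to a carefully chosen count determined from a sufficiently fine dyadic approximation of the limit). The only thing worth flagging is to mention explicitly which theorem supplies the antecedent, so that the reader sees the implication does not rely on any additional assumption beyond the C++ programming model fixed in Section~\ref{S:C++model}.
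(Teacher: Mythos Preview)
Your proof is correct and matches the paper's approach exactly: the paper simply states that applying Lemma~\ref{L:limit} to Theorem~\ref{T:indep} yields the corollary, which is precisely your instantiation with $X=G$ and $f\equiv 0$.
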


\section{Conclusions}\label{S:conclusions}

This study did not cover all combinations of a programming model, variant of
the halting problem, and variant of the tester.
So there is a lot of room for future work.

The results highlight what was already known since~\cite{Lyn74}: the
programming model has a significant role.
With some programming models, a phenomenon of secondary interest dominates the
distribution of programs, making hard instances rare.
Such phenomena include compile-time errors and falling off the left end of
the tape of a Turing machine.

Many results were derived using the assumption that information can be packed
very densely in the program or the input file.
Sometimes it was not even necessary to assume that the program could use the
information.
It sufficed that the assumption allowed to make many enough similarly behaving
longer copies of an original program.
Intuition suggests that if the program can access the information, testing
halting is harder than in the opposite case.
A comparison of Theorem~\ref{T:easy-sd} to Theorem~\ref{T:hard-sd} supports
this intuition.

Corollaries~\ref{C:nolimit1} and~\ref{C:nolimit2} and the comment after
Corollary~\ref{C:nolimit1} tell that the proportion of \emph{all} (not just
hard) halting instances has no limit with end-of-file dead segment languages
and variant S of the halting problem, with the C++ model and variant G, and in
the framework of~\cite{KSZ05}.
It must thus oscillate irregularly as the size of the program grows ---
irregularly because of Lemma~\ref{L:limit}.
This is not a property of various notions of imperfect halting testers, but a
property of the halting problem itself.

\subsection*{Acknowledgements}
I thank professor Keijo Ruohonen for helpful discussions, and the anonymous
reviewers of SPLST '13 and Acta Cybernetica for their helpful comments.
The latter pointed out that Proposition~\ref{P:syntax} had been formulated
incorrectly.

\end{document}